\documentclass[journal]{IEEEtran}
\usepackage{amsmath,amsfonts}
\usepackage{algorithmic}
\usepackage{algorithm}
\usepackage{array}
\usepackage{bm}
\usepackage{textcomp}
\usepackage{stfloats}
\usepackage{url}
\usepackage{verbatim}
\usepackage{graphicx} 
\usepackage{cite}
\usepackage{amssymb}
\usepackage{CJK}
\usepackage{indentfirst}
\usepackage{cases}
\usepackage{setspace}
\usepackage{hyperref}
\usepackage{float}
\usepackage{titlecaps}
\usepackage{acronym}

\acrodef{isac}[ISAC]{integrated sensing and communications}
\acrodef{sinr}[SINR]{signal-to-interference-plus-noise ratio}
\acrodef{crb}[CRB]{Cram{\'e}r-Rao Bound}
\acrodef{dof}[DoF]{degrees of freedom}
\acrodef{miso}[MISO]{ multiple-input single-output}
\acrodef{mimo}[MIMO]{multiple-input and multiple-output}
\acrodef{mu-mimo}[MU-MIMO]{multi-user multiple-input and multiple-output}
\acrodef{mi}[MI]{mutual information}
\acrodef{pwm}[PWM]{planar-wave model}
\acrodef{swm}[SWM]{spherical-wave model}
\acrodef{nusw}[NUSW]{non-uniform spherical-wave}
\acrodef{hspm}[HSPM]{hybrid spherical and planar wavefront model}
\acrodef{aoa}[AoA]{angle of arrival}
\acrodef{aod}[AoD]{angle of departure}
\acrodef{hbf}[HBF]{hybrid beamforming}
\acrodef{los}[LoS]{line-of-sight}
\acrodef{nlos}[NLoS]{non-line-of-sight}
\acrodef{awgn}[AWGN]{additive white Gaussian noise}
\acrodef{rcs}[RCS]{radar cross section}
\acrodef{fim}[FIM]{Fisher’s information matrix}
\acrodef{scnr}[SCNR]{signal-to-clutter-plus-noise ratio}
\acrodef{mvdr}[MVDR]{minimum variance distortionless response}
\acrodef{svd}[SVD]{singular-value-decomposition}
\acrodef{qmp}[QMP]{quadratic matrix programming}
\acrodef{sdr}[SDR]{semidefinite relaxation}
\acrodef{sdp}[SDP]{semidefinite programming}
\acrodef{kkt}[KKT]{Karush-Kuhn-Tucker}
\acrodef{music}[MUSIC]{mutiple signal 
classification}
\acrodef{xl}[XL]{extremely
large-scale}
\acrodef{mmwave}[mmWave]{millimeter wave}
\acrodef{thz}[THz]{terahertz}
\acrodef{xl-mimo}[XL-MIMO]{extremely
large-scale multiple-input and multiple-output}
\acrodef{wsms}[WSMS]{widely-spaced multi-subarray}
\acrodef{rf}[RF]{radio frequency}
\acrodef{bs}[BS]{base station}
\acrodef{hda}[HDA]{hybrid digital and analog}
\acrodef{tx}[Tx]{transmitter}
\acrodef{rx}[Rx]{receiver}
\acrodef{rf}[RF]{radio frequency}
\acrodef{iid}[i.i.d.]{independent and identically distributed}

\ifCLASSOPTIONcompsoc
\usepackage[caption=false,font=normalsize,labelfont=sf,textfont=sf]{subfig}
\else
\usepackage[caption=false,font=footnotesize]{subfig}
\fi

\hypersetup{colorlinks=true}
\hyphenation{op-tical net-works semi-conduc-tor}

\def\BibTeX{{\rm B\kern-.05em{\sc i\kern-.025em b}\kern-.08em
		T\kern-.1667em\lower.7ex\hbox{E}\kern-.125emX}}

\newtheorem{remark}{\textbf{Remark}}
\newtheorem{theorem}{\textbf{Theorem}}

\newtheorem{lemma}{\textbf{Lemma}}

\newtheorem{corollary}{\textbf{Corollary}}

\newtheorem{proposition}{\textbf{Proposition}}

\newenvironment{proof}{{\indent \indent \it Proof:\quad}}{\hfill $\blacksquare$\par}


\makeatletter

\newcommand{\Rmnum}[1]{\expandafter\@slowromancap\romannumeral #1@}
\makeatother

\usepackage{framed} 

\usepackage{cancel}

\begin{document}

	\title{Near-Field Hybrid Beamforming Design for  Modular XL-MIMO ISAC Systems}
	\author{Chunwei~Meng, Dingyou~Ma, Zhaolin~Wang, Yuanwei~Liu, Zhiqing~Wei, Zhiyong~Feng
		\thanks{C. Meng, D. Ma, Z. Wei, and Z. Feng are with the Key Laboratory of Universal Wireless Communications, Ministry of Education, Beijing University of Posts and Telecommunications, Beijing 100876, China (e-mail: \{mengchunwei, dingyouma, weizhiqing, fengzy\}@bupt.edu.cn).
  Z. Wang and Y. Liu are with the School of Electronic Engineering and Computer Science, Queen Mary University of London, London E1 4NS, U.K. (e-mail: \{zhaolin.wang, yuanwei.liu\}@qmul.ac.uk).
  (Corresponding author: Zhiqing~Wei,  Zhiyong~Feng)
		}
		
	}
	\maketitle
	\begin{abstract}
		\textcolor{black}{A novel modular extremely large-scale multiple-input-multiple-output (XL-MIMO) integrated sensing and communication (ISAC) system is investigated in this paper.
        A downlink sensing scenario is considered, where the modular XL-MIMO BS, consisting of widely-spaced identical subarrays, simultaneously serves a multi-antenna user and performs target sensing in the presence of multiple interferences.
        Due to the small subarray aperture and large array spacing, a piecewise-far-field channel model is employed to characterize both communication and sensing channels, where far-field propagation exists within each subarray while near-field effects dominate among subarrays.
       Based on the group-connected hybrid architecture, a joint analog-digital beamforming problem is formulated to optimize communication spectral efficiency while satisfying the sensing signal-to-clutter-plus-noise ratio (SCNR) requirement.
       Then, the optimal transmit covariance matrix is proved to lie in the subspace spanned by subarray response vectors.
       Based on this, a closed-form solution for the optimal analog beamformer is derived, and the original joint analog-digital optimization problem is transformed into a low-dimensional rank-constrained digital beamforming problem.
       Firstly, the semi-closed form of the optimal digital beamformer is derived and shown to form a complex Stiefel manifold.
       Based on this structure, a joint Riemannian-Euclidean gradient descent algorithm is developed for iterative optimization.
      Then,  an semidefinite relaxation-based approach is proposed, where the near-optimal solution is  obtained through rank constraint relaxation and randomization.
      The superiority of the proposed
algorithms is extensively validated through, revealing that the optimal subarray scale balances spatial multiplexing and beamforming gains based on user distance, while increasing subarray numbers significantly enhances range resolution due to more pronounced spherical wavefronts.}
        %

	\end{abstract}
	
	\begin{IEEEkeywords}
		\noindent
      Hybrid beamforming, hybrid spherical and planar wavefront model, integrated snesing and communication, modular extremely large-scale MIMO. 
		
	\end{IEEEkeywords}

	\section{Introduction}

\IEEEPARstart{T}{he} sixth generation wireless systems (6G)
have been envisioned as a vital enabler for
numerous emerging applications, such as intelligent manufacturing and smart transportation\cite{liu2022isactoward}.
The challenging problem is to satisfy the requirements of these applications for high-capacity communications and high-resolution sensing, which motivates the development of \ac{isac} technologies\cite{lu2024isacsurvey}.
The advant of \ac{xl-mimo} and the exploration of \ac{mmwave}/sub-\ac{thz} frequency bands in 6G systems lead to a growing convergence between  communication and sensing in terms of channel characteristics and signal processing techniques.
This convergence makes it feasible to achieve high-precision sensing and high-speed communication simultaneously using integrated waveforms and hardware platforms, offering several advantages such as reduced hardware costs, improved spectral efficiency, and mutual benefits between sensing and communication functionalities\cite{hua2024nefisac,liu2019hybridisac}.

However, the large array aperture and high operating frequencies adopted in 6G systems significantly extend the Rayleigh distance, making the near-field effect more prominent for communication users and sensing targets\cite{liu2023tutorial}.
Recent studies have shown that near-field propagation not only invalidates traditional far-field channel models  but also introduces new opportunities and challenges for \ac{isac} techniques\cite{liu2024sensing, hua2024nefisac,cong2023near,wang2023nearisac,Galappaththige2024nfisac}.
%
The work in \cite{cong2023near} revealed that the near-field effect can potentially enhance both communication and sensing performance.
The authors in \cite{wang2023nearisac} proposed a novel near-field \ac{isac} framework that leverages the additional distance dimension for optimal waveform design, demonstrating superior performance compared to traditional far-field \ac{isac} systems. 
Subsequently,  \cite{Galappaththige2024nfisac} proposed an efficient iterative near-field beamforming algorithm for multi-target detection,  achieving significant enhancements in localization accuracy over conventional far-field techniques. 
Despite these advancements, the high computational complexity and substantial hardware deployment costs associated with near-field \ac{xl-mimo} \ac{isac} systems pose significant challenges for practical implementation in realistic scenarios \cite{cong2023near}.

To address these challenges, a novel modular \ac{xl-mimo} architecture, also known as \ac{wsms}, has recently emerged as a promising solution\cite{yan2021joint,li2022nearfield,li2023multi, lu2023tutorial,yang2023performance}. 
This architecture consists of multiple modular subarrays, each composed of a flexible number of array elements with typical half-wavelength spacing, while the subarrays are separated by relatively large distances\cite{li2022nearfield}.
The modular \ac{xl-mimo} architecture offers advantages such as reduced hardware costs, lower power consumption, and ease of flexible deployment in practical scenarios \cite{yang2023performance}.
Moreover, the enlarged array aperture resulting from the wide spacing between subarrays  exhibits a larger
near-field region, potentially enhancing the performance of near-field \ac{isac} systems.
Studies have shown that modular \ac{xl-mimo}, compared to collocated arrays, exhibits a more pronounced near-field effect \cite{li2022nearfield},  provides both inter-path and intra-path multiplexing gains to improve spectral efficiency\cite{yan2021joint}, and better adapts to the spatial non-stationarity of the channel\cite{chen2023non}.
In \cite{yang2023performance}, the authors investigated the potential of near-field localization with modular \ac{xl-mimo} by analyzing the \ac{crb} for angle and range estimation, revealing that the increased array aperture and angular span of the modular array significantly enhance the near-field localization performance compared to traditional collocated arrays.

Despite the benefits of modular \ac{xl-mimo} for both near-field communication and sensing, the design and optimization of modular \ac{xl-mimo} \ac{isac} systems remain unexplored.
The primary challenge lies in the hybrid beamforming design, where the group-connected hybrid architecture imposes a block-diagonal constraint on the analog beamformer, resulting in an intractable non-convex optimization problem\cite{zhangjun2020hybrid}.
Existing \ac{hbf} methods \cite{shen2022alter,wang2022partial} mainly focus on optimizing for a single communication functionality and cannot effectively handle the block-diagonal constraint in modular \ac{xl-mimo} \ac{isac} systems, where the coupled communication and sensing performance lead to potential conflicts and make it impractical to separately optimize each diagonal submatrix or directly apply iterative optimization methods.
%
Moreover, the extremely large antenna arrays introduce high computational complexity, making conventional optimization algorithms impractical for real-world implementation. 
Additionally, the modular structure exhibits unique channel propagation characteristics, with the  entire large-scale array operating in the near-field while the small-scale subarrays are in the far-field\cite{cui2024nearwide}.
This hybrid propagation characteristic  presents an opportunity for complexity reduction, as it naturally decomposes into far-field propagation within each subarray and near-field effects between subarrays, mitigating the nonlinear phase variation of received signals\cite{shin2024iccsubarray}. 
Therefore, it is necessary to develop an efficient optimization algorithm that exploits the hybrid propagation channel structure while addressing both the structural constraints and performance requirements.

In this paper,  we propose a low-complexity hybrid beamforming design tailored for modular \ac{xl-mimo} \ac{isac} systems.
 The main contributions of the paper are summarized as follows:
 \begin{itemize}
      \item We consider a downlink \ac{isac} scenario, where a \ac{bs}, equipped with modular \ac{xl}-arrays and the \ac{hda} architecture, serves a multi-antenna communication user while sensing a target in the presence of multiple interferences.
      The number of active \ac{rf} chains is adapted to the communication and sensing channel conditions to provide sufficient spatial \ac{dof}s.
      Considering the relatively small size of the subarrays compared to the entire array, we employ  the piecewise-far-field channel model to characterize the communication and sensing channels, as the user and target are typically  located in the far-field of each subarray but the near-field of the entire array.

	\item 
       \textcolor{black}{We formulate a joint analog-digital beamforming optimization problem based on communication spectral efficiency and sensing \ac{scnr}.
       Then, we derive a closed-form solution for the optimal analog beamformer by proving that the optimal transmit covariance matrix lies in the subspace spanned by the communication and sensing subarray response vectors.
       This theoretical insight transforms the original joint analog-digital optimization problem into a low-dimensional digital beamforming problem, substantially reducing the computational complexity.}
   \item 
      \textcolor{black}{We develop two efficient algorithms to solve the rank-constrained digital beamforming optimization. 
      First, we derive the semi-closed form of the optimal digital beamformer, which forms a complex Stiefel mainifold.
      Based on this, we propose a Riemannian-Euclidean joint gradient descent algorithm to iteratively obtain the local optimal solution.
      Additionally, we develop a \ac{sdr}-based approach that relaxes the non-convex rank constraint and transforms the problem into a \ac{sdp} one, which can be solved to obtain a near-optimal solution through the randomization technique.}

   \item 
 We conduct extensive simulations to validate the effectiveness of the proposed modular \ac{xl-mimo} \ac{isac}  algorithms.
   \textcolor{black}{The results reveal several key findings: i) The proposed \ac{sdr}-based algorithm achieves superior performance over conventional \ac{hbf} algorithms.
   ii) The optimal subarray scale varies with user distance, representing a tradeoff between spatial multiplexing and beamforming gains.
   iii) Increasing the number of subarrays significantly enhances the range resolution due to more pronounced spherical wavefronts across subarrays.}
   These insights provide valuable design guidelines for modular \ac{xl-mimo}  \ac{isac} systems.
   
\end{itemize}

The rest of this paper is organized as follows.
Section~\ref{section_system_model} introduces the system settings and the  transmmitted signal model. 
Section~\ref{section_performetric_problem} introduces communication and sensing channel models, along with their respective performance metrics, and presents the problem formulation. 
Section~\ref{algorithm_design} proposes two low-complexity \ac{hbf} algorithms.  
Simulation results are presented in Section~\ref{section_simulation}, followed by concluding remarks in Section~\ref{section_conclusion}.

\textit{Notations}:
We use boldface lower-case and upper-case letters to denote column vectors and  matrices, respectively.
$\mathbb{C}$ denotes the set of complex number  and $(\cdot)^T$, $(\cdot)^H$, $(\cdot)^{*}$ , $(\cdot)^{-1}$, $(\cdot)^{\dagger}$ denote the transposition, conjugate transposition, conjugate, inverse and pseudo-inverse,  respectively.
$\mathbb{E}(\cdot)$ denotes expectation.
$\otimes$ denotes the Kronecker product.
$\mathbf{A} \succeq \mathbf{0}$ indicates that the matrix $\mathbf{A}$ is positive semi-definite.
$\mathbf{I}_M$ indicates an $M\times M$ identity matrix.
The $\mathrm{diag}(\cdot)$ operator creates a diagonal matrix with the input elements placed along the main diagonal, while all other entries are set to zero.
$\det(\cdot)$ and $\text{tr}(\cdot)$ denote the determinant and trace of a matrix, respectively.

\section{System Model} \label{section_system_model}
	As illustrated in Fig. \ref{system_model}, we consider a downlink \ac{isac} system, where the \ac{isac} \ac{bs} equipped with  transmit and receive modular \ac{xl}-arrays  communicates with a multi-antenna user while simultaneously sensing a radar target.
Both  transmit and receive arrays are deployed along the $x$-axis, symmetrically centered around the origin, and widely separated in space to suppress signal leakage from transmitter and receive  clear sensing echoes\cite{yuan2021bayesian}.
	The total number of transmit and receive antennas is $N=KM$, with $K$ subarrays and $M$ antenna elements within each subarray.
	The communication user has $N_c$ antennas, where $N_c \ll N$.

    The antenna spacing within each subarray is $d=\frac{\lambda}{2}$, where $\lambda$ is the wavelength.
    For each subarray, we select its first element as the reference antenna. 
	We denote the  inter-subarray spacing $d_s=\Gamma d$ as the distance between the reference antennas of the adjacent subarrays, where $\Gamma \geq M$.
	There are two main reasons for considering $d_s$ to be much larger than $Md$.
	Firstly, it accommodates practical mounting structures, such as modular \ac{xl}-arrays mounted on building facades that are separated by windows\cite{li2023multi}.
	Secondly,  a large inter-subarray spacing expands the array aperture, $S$, which increases the near-field range of the overall antenna array to $\frac{2S^2}{\lambda}$.
   This expansion allows users initially located in the far-field of a collocated array to be within the near-field range of the modular array, enabling them to benefit from various advantages.

Assuming the distance between the transmit and receive arrays is $2D_0$, the position of the $m$-th element in the $k$-th subarray at \ac{tx} and \ac{rx} is given by $\mathbf{l}_{k, m}^t=({x}_{k, m}^t,0)$ and $\mathbf{l}_{k, m}^r=({x}_{k, m}^r,0)$, respectively, where ${x}_{k, m}^t={D_0}+(k-1)d_s+(m-1)d$ and ${x}_{k, m}^r=-{D_0}-(k-1)d_s-(m-1)d$, with $m \in \mathcal{M}\triangleq\{1,2, \ldots, M\}$ and $k  \in \mathcal{K}\triangleq\{1,2, \ldots, K\}$. 
	Suppose that a user, target, scatterer or interference located at $\mathbf{l}_{q}=(r \sin \theta, r \cos \theta)$is characterized by its distance $r$ from the origin and angle $\theta \in\left[-\frac{\pi}{2}, \frac{\pi}{2}\right]$ with respect to the positive $y$-axis.

	\begin{figure} [t]
 \setlength{\abovecaptionskip}{-0.1cm}
		\centering
		\includegraphics[width=0.4 \textwidth]{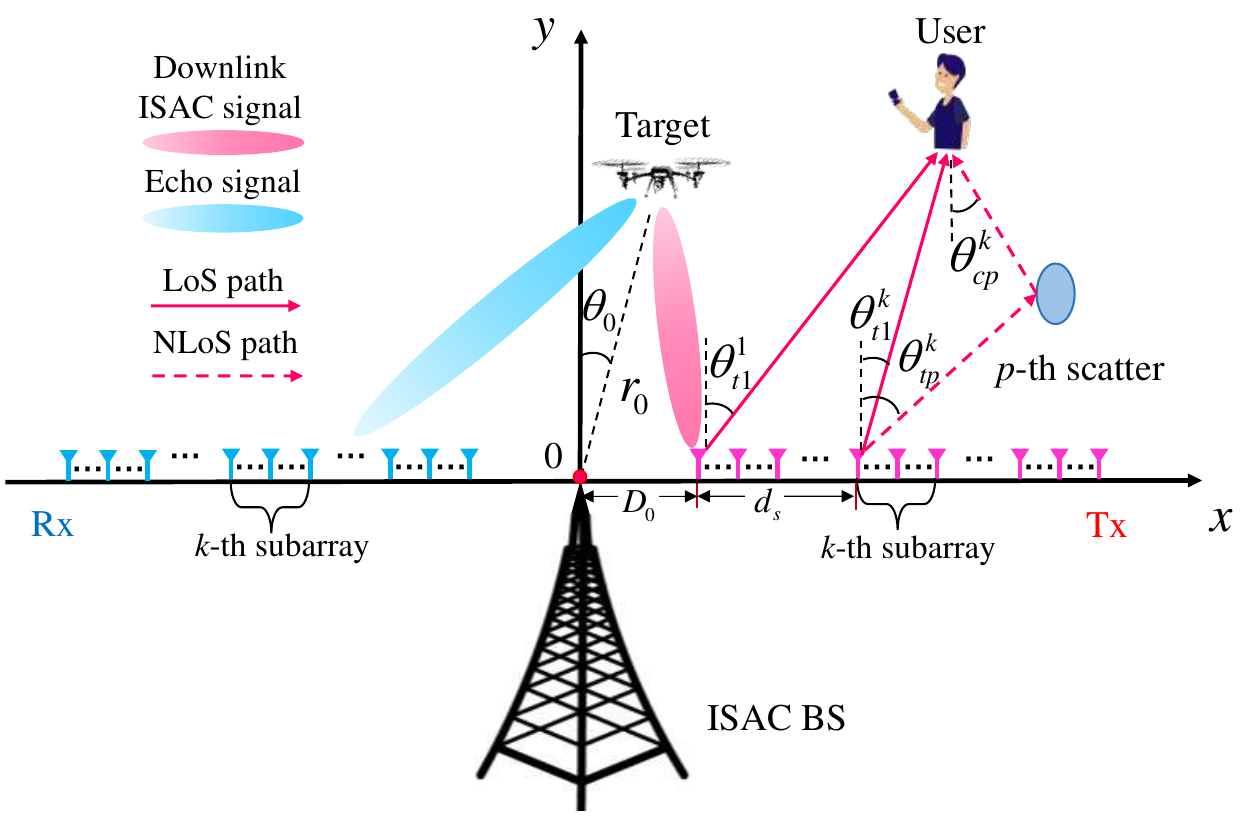}
		\caption{\textcolor{black}{Illustration of the downlink \ac{isac}  scenario  where the BS equipped with modular XL-arrays serves a multi-antenna communication user while sensing a target.}}
\label{system_model}\vspace{-3mm}
	\end{figure}

	\subsection{Transmit ISAC Signal}
	%

%

%

 As illustrated in Fig.~\ref{transmitter}, we consider a group-connected \ac{hda} architecture,  where each subarray is equipped with dedicated \ac{rf} chains that are connected to all antennas within that subarray but cannot be shared across subarrays due to the the hardware limitations\cite{yan2021joint,zhangjun2020hybrid}.
 Specifically, each subarray is equipped with $\tilde{M}_{\text{RF}}$ \ac{rf} chains, where each chain is connected to all antennas within that subarray, resulting in a total of $\tilde{N}_{\text{RF}}=K\tilde{M}_{\text{RF}}$ available \ac{rf} chains at the \ac{tx}.
 For energy efficiency, digital switches are employed after the digital precoder to adaptively control the number of active \ac{rf} chains according to the practical communication and sensing channel conditions \cite{zhang2024dynamic}.
We denote the number of active \ac{rf} chains as $M_{\text{RF}}$ per subarray, resulting in a total of $N_{\text{RF}}=KM_{\text{RF}}$ active RF chains.
For data transmission, the multi-antenna user can  can support the transmission of $N_s$ data streams, where $N_s \leq N_{\text{RF}}$.

We consider a coherent time block of $L$ symbols, during which the communication channels and sensing target parameters remain invariant.
 The narrowband transmitted \ac{isac} signal is denoted by $\mathbf{X}\triangleq[\mathbf{x}[1], \mathbf{x}[2], \ldots,\mathbf{x}[L]]\in\mathbb{C}^{N \times L}$,  where  $\mathbf{x}[l]$ is the transmitted signal at time index $l$.
 The information symbol matrix is given by $\mathbf{S} = [\mathbf{s}[1], \mathbf{s}[2], \ldots,\mathbf{s}[L]]\in {\mathbb{C}}^{N_s \times L}$, where  $\mathbf{s}[ l ]$ is the symbol vector transmitted at time index $l$, and $N_s$ is the number of data streams.
Therefore, the discrete-time transmitted signal at time index $l$ is expressed as
	\begin{equation} \label{transmit signal}
		\mathbf{x}[l]={\mathbf{W}}_{\text{RF}}{\mathbf{W}}_{\text{BB}}\mathbf{s}[l],
	\end{equation}
	where  $\mathbf{W}_{\text{RF}} \in \mathbb{C}^{N \times N_{\text{RF}}} $ and ${{\mathbf{W}}_{\text{BB}}}\text{=}\left[ {{\mathbf{w}}_{\text{BB},1}}, \ldots,{{\mathbf{w}}_{\text{BB},{{N}_{s}}}} \right]\in {{\mathbb{C}}^{{{N}_{\text{RF}}}\times {{N}_{s}}}}$ denote the analog and digital beamformers, respectively.
	Each entry in $\mathbf{S}$ is assumed to be i.i.d. and Gaussian distributed with zero mean and unit variance.
	The data streams are assumed to be independent of each other, satisfying $\mathbb{E}\left\{\mathbf{s}[l]\mathbf{s}^H[l]\right\}=\mathbf{I}, \forall l$\cite{wang2023nearisac}.
	Then, the covariance matrix of transmitted signal  is given by
	\begin{equation} \label{cov_matrix_RX}
		\mathbf{R}_{X} = \mathbb{E}\left\{\mathbf{x}[l]\mathbf{x}^H[l]\right\} ={{\mathbf{W}}_{\text{RF}}}{{\mathbf{W}}_{\text{BB}}}\mathbf{W}_{\text{BB}}^{H}\mathbf{W}_{\text{RF}}^{H}.
	\end{equation}

	\begin{figure} [t]
 \setlength{\abovecaptionskip}{-0.1cm}
		\centering
		\includegraphics[width=0.35 \textwidth]{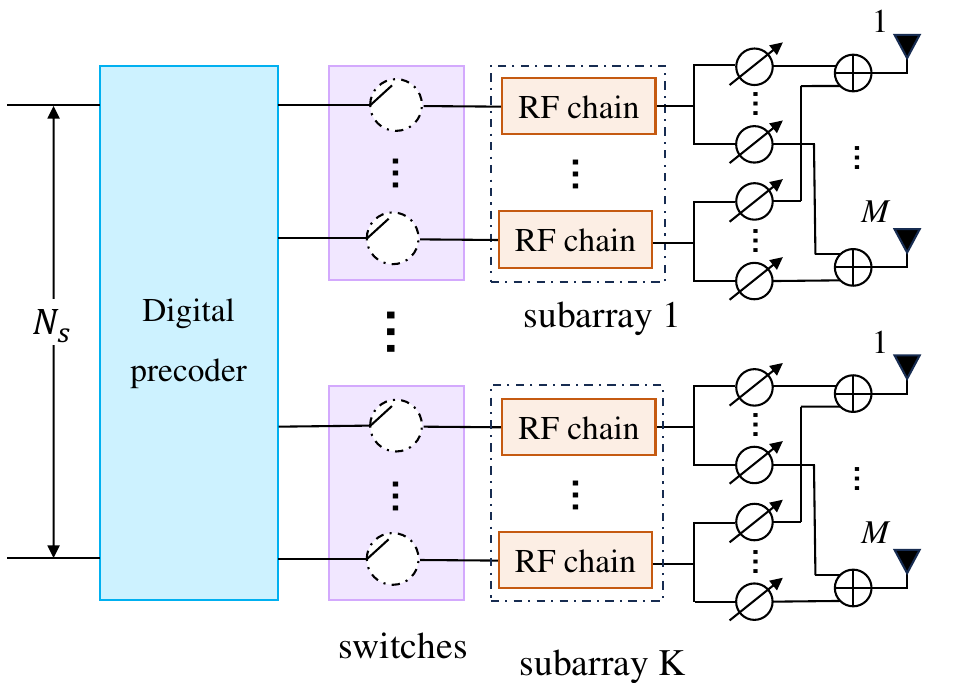}
		\caption{\textcolor{black}{Illustration of the hybrid beamforming architecture for the proposed modular XL-MIMO system.}}
		\label{transmitter}\vspace{-3mm}
	\end{figure}

The analog beamformer ${{\mathbf{W}}_{\text{RF}}}$ exhibits a block diagonal structure:
	\begin{equation}
           \setlength{\abovedisplayskip}{2pt}
	\setlength{\belowdisplayskip}{4pt}
{{\mathbf{W}}_{\text{RF}}}=\mathrm{diag}\left( \mathbf{\tilde{W}}_{\text{RF}}^{1},\cdots ,\mathbf{\tilde{W}}_{\text{RF}}^{K} \right),
	\end{equation}
	where \textcolor{black}{$\mathbf{\tilde{W}}_{\text{RF}}^{k}=\left[
		{{\mathbf{w}}_{k1}}, {{\mathbf{w}}_{k2}}, \ldots, {{\mathbf{w}}_{k{{M}_{\text{RF}}}}}   \right] \in \mathbb{C}^{M\times M_{\text{RF}}} 
	$} is the analog beamformer of the $k$-th subarray,  ${{\mathbf{w}}_{kt}}$, $\forall t=1, 2, \cdots,M_{\text{RF}}$ are the phase shifter values at the $k$-th subarray,  with each  element in ${{\mathbf{w}}_{kt}}$ has unit modulus and continuous phase, i.e., $\left|\mathbf{w}_{k t}[m]\right|^2=1, \forall m\in\mathcal{M}$.
	Additionally, the normalized transmit power constraint is given by $\left\|\mathbf{W}_{\text{RF}} \mathbf{W}_{\text{BB}}\right\|_F^2\le N_s$.

	\section{Performance Metrics and Problem Formulation}\label{section_performetric_problem}
This section first introduces the piecewise-far-field channel model for characterizing the communication and sensing channels.
 Based on this model, we then present the performance metrics for communication and sensing, and formulate the joint analog-digital beamforming optimization problem. 
 
 The small subarray aperture  resulting from $M \ll N$ limits the near-field range of each subarray to only a few meters, making it more likely for the user, target, or scatterer to be located in the far-field region of each subarray while remaining in the near-field region of the overall array.
 The small subarray aperture resulting from $M \ll N$ leads to a negligible near-field range for each subarray. 
  Therefore, we adopt the piecewise-far-field model, which assumes far-field propagation within each subarray and near-field propagation among subarrays \cite{cui2024nearwide,yan2021joint}.
 \begin{remark}
     \emph{\textcolor{black}{For example, considering a modular array with $K=6$, $M=32$ at 38 GHz, the subarray Rayleigh distance is only 2.5 m. In contrast, the overall array has a 40 m Rayleigh distance when collocated (half-wavelength spacing) and 250 m when widely spaced (10-wavelength spacing). Thus, users and targets are typically in the far-field of each subarray but near-field of the overall array}.}
 \end{remark} 
 %


	\subsection{Communication Performance}

  As the size of the user's antenna array is  also considerably small, the communication channel between the transmit subarrays  and  the user's receive array can be modeled under the far-field assumption. 
	Therefore, the far-field communication channel between  the $k$-th transmit subarray and the user is given by\cite{zhu2023subarray,yan2021joint}:
	\begin{equation}\label{subchannel}
       \setlength{\abovedisplayskip}{3pt}
	\setlength{\belowdisplayskip}{3pt}
\mathbf{H}_{\text{sub}}^{k}=\sum\limits_{p=1}^{{{N}_{p}}}{\mu _{p}^{k}}\mathbf{a}_{cp}^{k}\left( \theta _{cp}^{k} \right)\mathbf{a}_{tp}^{k}{{\left( \theta _{tp}^{k} \right)}^{H}},
	\end{equation}
	where  $N_p$ denotes the total number of distinguishable paths between \ac{tx} and the user, including both \ac{los} and \ac{nlos} propagation, ${\mu _{p}^{k}}=\left| \mu _{p}^{k} \right|{{e}^{-j\frac{2\pi }{\lambda }D_{p}^{k}}}$ denotes the complex gain of $p$-th path between the $k$-th transmit subarray and the user's receive array,  which captures the near-field effect across different subarrays as it depends on the exact distance $D_{p}^{k}$ between the $k$-th transmit subarray's reference antenna and the user's receive array's reference antenna along the $p$-th multipath, with $p = 1$ and $p>1$ representing the \ac{los} path and the \ac{nlos} path, respectively.
	$\theta *{cp}^{k}$ and $\theta *{tp}^{k}$ represent the \ac{aoa} relative to the receive array and \ac{aod} relative to the $k$-th transmit subarray along the $p$-th propagation path between the \ac{tx} and user, as illustrated in Fig.~\ref{system_model}.
	The  array steering vector of the $p$-th path for the $k$-th transmit subarray is given by
	\textcolor{black}{$\mathbf{a}_{tp}^{k}\left( \theta _{tp}^{k} \right)={{\left[ 1, {{e}^{-j\frac{2\pi }{\lambda }d\sin \theta _{tp}^{k}}}, \ldots, {{e}^{-j\frac{2\pi }{\lambda }(M-1)d\sin \theta _{tp}^{k}}} \right]}^{T}}$.}
	Similarly, the  array steering vector of the $p$-th path for the user's receive array is given by $\mathbf{a}_{cp}^{k}\left( \theta _{cp}^{k} \right)={{\left[ 1, {{e}^{-j\frac{2\pi }{\lambda }d\sin \theta _{cp}^{k}}}, \ldots, {{e}^{-j\frac{2\pi }{\lambda }({{N}_{c}}-1)d\sin \theta _{cp}^{k}}} \right]}^{T}}$.

Based on the piecewise-far-field channel model, we can characterize the overall communication channel between the user and the transmit array by concatenating the individual far-field subarray channels in \eqref{subchannel}.
The complete mathematical expression is presented in (\ref{hspm_com}) at the bottom of this page.
	\begin{figure*}[!b]
 	\rule[-0pt]{18.5 cm}{0.05em}
		\begin{equation} \label{hspm_com}
			\begin{aligned}
				& {{\mathbf{H}}_{c}}=\left[ \begin{matrix}
					\mathbf{H}_{\text{sub}}^{1} & \cdots  & \mathbf{H}_{\text{sub}}^{K}  \\
				\end{matrix} \right]  
				=\left[ \begin{matrix}
					\sum\limits_{p=1}^{{{N}_{p}}}{\mu _{p}^{1}}\left[ \mathbf{a}_{cp}^{1}\left( \theta _{cp}^{1} \right)\mathbf{a}_{tp}^{1}{{\left( \theta _{tp}^{1} \right)}^{H}} \right] & \cdots  & \sum\limits_{p=1}^{{{N}_{p}}}{\mu _{p}^{K}}\left[ \mathbf{a}_{cp}^{K}\left( \theta _{cp}^{K} \right)\mathbf{a}_{tp}^{K}{{\left( \theta _{tp}^{K} \right)}^{H}} \right]  
				\end{matrix} \right] .
			\end{aligned}
		\end{equation}
	\end{figure*}
 
  In the \ac{mmwave}/sub-THz band, the channels tend to be more sparse due to significant losses caused by large reflection, diffraction, and scattering effects \cite{xing2021milli,gougeon2019ray}.
  As paths with insignificant path gains can be disregarded, the number of multipaths in the \ac{mmwave}/sub-THz band is quite limited. 
  This limitation can decrease the rank of the communication channel, leading to a decline in the communication spectral efficiency.  
To thoroughly understand this rank-dependent behavior, we provide a rigorous analysis of the rank of ${{\mathbf{H}}_{c}}$ in Lemma \ref{Hc_rank} below.
	\begin{lemma} \label{Hc_rank}
		\emph{ The rank of  ${{\mathbf{H}}_{c}}$ satisfies
			\begin{equation}
				\min\{N_p, N_c\}\leq\mathrm{rank}\left( {{\mathbf{H}}_{c}} \right) \leq \min\{KN_p, N_c\}.
			\end{equation}
		}
	\end{lemma}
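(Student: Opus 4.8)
The plan is to bound $\mathrm{rank}(\mathbf{H}_c)$ by analyzing the block structure of \eqref{hspm_com}, where $\mathbf{H}_c = [\mathbf{H}_{\text{sub}}^1 \ \cdots \ \mathbf{H}_{\text{sub}}^K] \in \mathbb{C}^{N_c \times N}$. Since the number of rows is $N_c$, we immediately have $\mathrm{rank}(\mathbf{H}_c) \le N_c$. For the remaining part of the upper bound, I would observe that each subarray channel $\mathbf{H}_{\text{sub}}^k = \sum_{p=1}^{N_p} \mu_p^k \mathbf{a}_{cp}^k (\theta_{cp}^k) \mathbf{a}_{tp}^k(\theta_{tp}^k)^H$ is a sum of $N_p$ rank-one terms, so $\mathrm{rank}(\mathbf{H}_{\text{sub}}^k) \le N_p$. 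The column space of $\mathbf{H}_c$ is the sum of the column spaces of the blocks, $\mathrm{colspace}(\mathbf{H}_c) = \sum_{k=1}^K \mathrm{colspace}(\mathbf{H}_{\text{sub}}^k)$, and each summand has dimension at most $N_p$; hence $\mathrm{rank}(\mathbf{H}_c) \le \sum_{k=1}^K \mathrm{rank}(\mathbf{H}_{\text{sub}}^k) \le K N_p$. Combining gives the upper bound $\mathrm{rank}(\mathbf{H}_c) \le \min\{K N_p, N_c\}$.

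For the lower bound, I would work with a single subarray block, say $k=1$. Write $\mathbf{H}_{\text{sub}}^1 = \mathbf{A}_c \mathbf{D} \mathbf{A}_t^H$ where $\mathbf{A}_c = [\mathbf{a}_{c1}^1 \ \cdots \ \mathbf{a}_{cN_p}^1] \in \mathbb{C}^{N_c \times N_p}$, $\mathbf{A}_t = [\mathbf{a}_{t1}^1 \ \cdots \ \mathbf{a}_{tN_p}^1] \in \mathbb{C}^{M \times N_p}$, and $\mathbf{D} = \mathrm{diag}(\mu_1^1, \ldots, \mu_{N_p}^1)$. The steering vectors are Vandermonde-type vectors with distinct generators (distinct AoAs/AoDs along physically distinct paths), so $\mathbf{A}_c$ has rank $\min\{N_c, N_p\}$ and $\mathbf{A}_t$ has rank $\min\{M, N_p\} = N_p$ (using $N_c \ll N$, and generically $M \ge N_p$ in the sparse mmWave regime). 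Since $\mathbf{D}$ is invertible (all path gains nonzero), $\mathrm{rank}(\mathbf{H}_{\text{sub}}^1) = \min\{N_c, N_p, M\} = \min\{N_c, N_p\}$. Because appending further columns cannot decrease rank, $\mathrm{rank}(\mathbf{H}_c) \ge \mathrm{rank}(\mathbf{H}_{\text{sub}}^1) = \min\{N_p, N_c\}$, giving the lower bound.

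The main obstacle is making the full-rank claims about the Vandermonde-structured steering matrices rigorous: one must argue that the AoAs $\{\theta_{cp}^k\}_{p}$ (respectively AoDs $\{\theta_{tp}^k\}_p$) within a subarray are pairwise distinct, so that the phase ratios $e^{-j\frac{2\pi}{\lambda}d\sin\theta}$ are distinct and standard Vandermonde nondegeneracy applies. If two paths happen to share the same AoD, the two corresponding columns of $\mathbf{A}_t$ coincide and the argument degrades; the statement should therefore be understood generically, or under the implicit assumption that the $N_p$ paths are resolvable in angle (consistent with "$N_p$ distinguishable paths" in the text). I would state this resolvability assumption explicitly at the start of the proof, after which both bounds follow from the rank arithmetic above.
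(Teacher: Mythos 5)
Your proof is correct, but it departs from the paper's argument in two places worth noting. For the $KN_p$ part of the upper bound, the paper works with the \emph{row} space: it shows every row of $\mathbf{H}_c$ is a linear combination of the $KN_p$ zero-padded transmit steering vectors $\bar{\mathbf{a}}_{tp}^{k}$ (linearly independent because they have disjoint supports across subarrays and distinct AoDs across paths), whereas you invoke rank subadditivity over the horizontal block concatenation, $\mathrm{rank}(\mathbf{H}_c)\le\sum_k \mathrm{rank}(\mathbf{H}_{\text{sub}}^k)\le KN_p$. Both are valid; the paper's version has the side benefit of exhibiting an explicit spanning set of the row space, which it reuses in Section IV to write $\mathbf{V}_c=\tilde{\mathbf{A}}_c\mathbf{T}$ and ultimately to prove Theorem 1, so your shorter argument would not fully substitute for the appendix as written. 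For the lower bound your route is genuinely different and, frankly, tighter: you factor a single block as $\mathbf{H}_{\text{sub}}^1=\mathbf{A}_c\mathbf{D}\mathbf{A}_t^H$, use the Vandermonde full-rank property and the invertibility of $\mathbf{D}$ to get $\mathrm{rank}(\mathbf{H}_{\text{sub}}^1)=\min\{N_c,N_p\}$, and then use the fact that a column submatrix cannot have larger rank than the whole matrix. The paper instead only remarks that the lower bound ``holds in the extreme case'' where all subarrays share the same arrival angles, which establishes tightness of the bound rather than its validity in general; your submatrix argument closes that gap. Your explicit statement of the angular-resolvability assumption (distinct $\sin\theta_{cp}^k$ and $\sin\theta_{tp}^k$ across paths, nonzero gains, $M\ge N_p$) makes precise what the paper leaves implicit in the phrase ``distinguishable paths,'' and is the right thing to add.
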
 
	\begin{proof} Please see Appendix \ref{channel_rank_proof}.
	\end{proof}

\begin{remark}
   \emph{\textcolor{black}{ For sufficiently large $N_c$, the rank of traditional far-field channel is limited by the number of multipaths $N_p$ \cite{gao2016energy}. In contrast, Lemma~\ref{Hc_rank} demonstrates that the subarray-based piecewise-far-field channel model can provide a more sufficient rank that is lower-bounded by $N_p$ and can potentially reach $KN_p$, enabling the support of up to $KN_p$ independent data streams.}} 
\end{remark}

	Given the transmit  signal in (\ref{transmit signal}), the received signal of the user is expressed as
	\begin{equation}
		{{\mathbf{Y}}_{c}}={{\mathbf{H}}_{c}}\mathbf{X}+{{\mathbf{Z}}_{c}}={{\mathbf{H}}_{c}}{{\mathbf{W}}_{\text{RF}}}{{\mathbf{W}}_{\text{BB}}}\mathbf{S}+{{\mathbf{Z}}_{c}},
	\end{equation}
	where ${{\mathbf{Z}}_{c}} \in \mathbb{C}^{N_c\times L}$ is a zero-mean complex Gaussian noise matrix with covariance matrix $\sigma_c^2 \mathbf{I}_{N_c}$.
	Then, the achievable communication spectral efficiency  can be calculated as
	\begin{equation} \label{comm_rate}
           \setlength{\abovedisplayskip}{2pt}
	\setlength{\belowdisplayskip}{2pt}
		C=\log \det \left( \mathbf{I}+\frac{1}{\sigma_c^2}{{\mathbf{H}}_{c}}{{\mathbf{W}}_{\text{RF}}}{{\mathbf{W}}_{\text{BB}}}\mathbf{W}_{\text{BB}}^{H}\mathbf{W}_{\text{RF}}^{H}\mathbf{H}_{c}^{H} \right).
	\end{equation}

	\subsection{Sensing Performance}

For radar sensing, we consider a target of interest located at $\mathbf{l}_{0}=(r_0 \sin {\phi}_0, r_0 \cos {\phi}_0)$ and $Q-1$ signal-dependent uncorrelated interferences located at $\mathbf{l}_{q}=(r_q \sin {\phi}_q, r_q \cos {\phi}_q), \forall q \in \{1, 2, \cdots, Q-1\}$. 
Therefore, the received signal at the \ac{bs} over $L$ symbols can be expressed as
	\begin{equation} \label{sensing channel}
		{{\mathbf{Y}}_{s}}=\underbrace{{{\beta }_{{0}}}{{\mathbf{g}}_{r0}}\mathbf{g}_{t0}^{H}\mathbf{X}}_{\text{Target reflection}}+\underbrace{\sum\nolimits_{q=1}^{Q-1}{{{\beta }_{q}}{{\mathbf{g}}_{rq}}\mathbf{g}_{tq}^{H}\mathbf{X}}}_{\text{Echo signal of interferences}}+{{\mathbf{Z}}_{s}}
	\end{equation}
	where $\beta_q$ denotes the complex reflection coefficient proportional to the \ac{rcs} of the $q$-th object, with $\mathbb{E}\left\{\left|\beta_q\right|^2\right\}=\alpha_q^2$ ($q=0$ for the target and $q \geq 1$ for the  interferences), 
	the \ac{awgn} is represented by ${{\mathbf{Z}}_{s}}=\left[ {{\mathbf{z}}_{1}}, {{\mathbf{z}}_{2}}, \ldots ,{{\mathbf{z}}_{L}} \right]\in {{\mathbb{C}}^{N\times L}}$, where each column is an \ac{iid} circularly symmetric complex Gaussian random vector with zero mean and covariance $\mathbf{R}_{s}=\sigma_s^2 \mathbf{I}_{N}$.
	Moreover, based on the piecewise-far-field channel model, the \ac{rx} and \ac{tx} array response vectors for the $q$-th object at the BS are denoted by ${{\mathbf{g}}_{rq}}$ and ${{\mathbf{g}}_{tq}}$, respectively, and can be expressed as
	\begin{subequations}\label{sensing response}
		\begin{align}
			&{{\mathbf{g}}_{tq}}= \left( \mathrm{diag}\left( {{\boldsymbol{\nu }}_{tq}} \right)\otimes {{\mathbf{I}}_{M}} \right){{{\mathbf{\hat{a}}}}_{tq}}, \\ 
			&{{\mathbf{g}}_{rq}}= \left( \mathrm{diag}\left( {\boldsymbol{\nu }_{rq}} \right)\otimes {{\mathbf{I}}_{M}} \right){{{\mathbf{\hat{a}}}}_{rq}}, 
		\end{align}
	\end{subequations}
where
 \begin{equation}  
 \begin{aligned}
&{{\mathbf{\hat{a}}}_{tq}}\!=\!\text{vec}\!\left( \left[ 
   \mathbf{a}_{tq}^{1}, \mathbf{a}_{tq}^{2}, \cdots,  \mathbf{a}_{tq}^{K} \right] \right), \\
& {{\mathbf{\hat{a}}}_{rq}}\!=\!\text{vec}\!\left( \left[ 
   \mathbf{a}_{rq}^{1}, \mathbf{a}_{rq}^{2}, \cdots,  \mathbf{a}_{rq}^{K}\right] \right),
 \end{aligned}
 \end{equation}
where
	$\mathbf{a}_{tq}^{k}={{[ 1, {{e}^{-j\frac{2\pi }{\lambda }d\sin \phi  _{tq}^{k}}}, \ldots, {{e}^{-j\frac{2\pi }{\lambda }( M-1 )d\sin \phi  _{tq}^{k}}}]}^{T}}$  and $\mathbf{a}_{rq}^{k}={{[ 1, {{e}^{-j\frac{2\pi }{\lambda }d\sin \phi _{rq}^{k}}}, \ldots, {{e}^{-j\frac{2\pi }{\lambda }( M-1 )d\sin \phi _{rq}^{k}}}]}^{T}}$ denote the intra-subarray response vectors for the \ac{tx} and \ac{rx}, considering the far-field effect within individual subarrays.
	The angles ${\phi}_{tq}^k$ and ${\phi}_{rq}^k$ denote the direction of the $q$-th object with respect to the positive $y$-axis, as observed from the perspective of the $k$-th subarray at the \ac{tx} and \ac{rx}, respectively, which are given by
	\begin{subequations}
       \setlength{\abovedisplayskip}{2pt}
	\setlength{\belowdisplayskip}{4pt}
		\begin{align}
			& \phi _{tq}^{k}=\text{arc} \sin \left(\frac{r_q\sin \phi_q -{{D}_{0}}-\left( k-1 \right){{d}_{s}}}{\left\| {{\mathbf{l}}_{q}}-\mathbf{l}_{k}^{t} \right\|}\right), \\ 
			&  \phi _{rq}^{k}=\text{arc} \sin \left(\frac{r_q\sin \phi_q +{{D}_{0}}+\left( k-1 \right){{d}_{s}}}{\left\| {{\mathbf{l}}_{q}}-\mathbf{l}_{k}^{r} \right\|}\right).	
		\end{align}
	\end{subequations}
	Additionally, ${{\boldsymbol{\nu}}_{tq}}$ and ${{\boldsymbol{\nu}}_{rq}}$ denote the inter-subarray response vectors of \ac{tx} and \ac{rx} toward the $q$-th object,  characterizing the near-field effect between subarrays, which can be expressed as
	\begin{equation}\label{nu_tq}
           \setlength{\abovedisplayskip}{3pt}
	\setlength{\belowdisplayskip}{3pt}
		\begin{aligned}
			& {\boldsymbol{\nu }}_{tq}=\left[  {e^{-j\frac{2\pi }{\lambda }\left\| {{\mathbf{l}}_{q}}-\mathbf{l}_{1}^{t} \right\|}},\ldots , {{}e^{-j\frac{2\pi }{\lambda }\left\| {{\mathbf{l}}_{q}}-\mathbf{l}_{K}^{t} \right\|}} \right]^T,\\
			& {\boldsymbol{\nu }}_{rq}=\left[  {{e}^{-j\frac{2\pi }{\lambda }\left\| {{\mathbf{l}}_{q}}-\mathbf{l}_{1}^{r} \right\|}},\ldots , {{e}^{-j\frac{2\pi }{\lambda }\left\| {{\mathbf{l}}_{q}}-\mathbf{l}_{K}^{r} \right\|}} \right]^T,	
		\end{aligned}
	\end{equation}
where $\mathbf{l}_{k}^{t}$ and $\mathbf{l}_{k}^{r}$ denote the positions of the reference antennas in the $k$-th subarray at \ac{tx} and \ac{rx}, i.e., $\mathbf{l}_{k}^{t}=\mathbf{l}_{k,1}^{t}$ and $\mathbf{l}_{k}^{r}=\mathbf{l}_{k,1}^{r}$.

 Then, the received signal  is processed by applying the receive beamformer $\mathbf{w}\in \mathbb{C}^{N\times1}$, resulting in the following output at the \ac{rx}:
	\begin{equation}
       \setlength{\abovedisplayskip}{2pt}
	\setlength{\belowdisplayskip}{4pt}
		\begin{aligned}
			{\mathbf{y}_{s}}={{\mathbf{w}}^{H}}{{\beta }_{0}}{{\mathbf{g}}_{r0}}\mathbf{g}_{t0}^{H}\mathbf{X}+{{\mathbf{w}}^{H}}\sum\limits_{q=1}^{Q-1}{{{\beta }_{q}}{{\mathbf{g}}_{rq}}\mathbf{g}_{tq}^{H}\mathbf{X}}+{{\mathbf{w}}^{H}}{{\mathbf{Z}}_{s}}.
		\end{aligned}
	\end{equation}
	Subsequently, the sensing \ac{scnr} can be calculated as 
	\begin{equation} \label{sensing_scnr}
           \setlength{\abovedisplayskip}{3pt}
	\setlength{\belowdisplayskip}{2pt}
		\begin{aligned}
			& {\gamma}_s =\frac{E\left[ {{\left\| {{\mathbf{w}}^{H}}{{\beta }_0}{{\mathbf{g}}_{r0}}\mathbf{g}_{t0}^{H}\mathbf{X} \right\|}^{2}} \right]}{E\left[ {{\left\| {{\mathbf{w}}^{H}}\sum\nolimits_{q=1}^{Q-1}{{{\beta }_{q}}{{\mathbf{g}}_{rq}}\mathbf{g}_{tq}^{H}\mathbf{X}} \right\|}^{2}} \right]+E\left[ {{\left\| {{\mathbf{w}}^{H}}{{\mathbf{Z}}_{s}} \right\|}^{2}} \right]}\\
            & =\frac{{{\mathbf{w}}^{H}}{{\mathbf{G}}_{0}}\left( {{\mathbf{W}}_{\text{RF}}}{{\mathbf{W}}_{\text{BB}}}\mathbf{W}_{\text{BB}}^{H}\mathbf{W}_{\text{RF}}^{H}\right)\mathbf{G}_{0}^{H}\mathbf{w}}{{{\mathbf{w}}^{H}}\left( \sum\nolimits_{q=1}^{Q-1}{{{\mathbf{G}}_{q}}\left( {{\mathbf{W}}_{\text{RF}}}{{\mathbf{W}}_{\text{BB}}}\mathbf{W}_{\text{BB}}^{H}\mathbf{W}_{\text{RF}}^{H}\right)\mathbf{G}_{q}^{H}}+\sigma _{s}^{2}{{\mathbf{I}}_{N}} \right)\mathbf{w}},
		\end{aligned}
	\end{equation}
where ${{\mathbf{G}}_{q}}\triangleq \alpha_{q}{{\mathbf{g}}_{rq}}\mathbf{g}_{tq}^{H}$, $q=0,1,\ldots,Q-1$.

	When the transmit covariance matrix  $\mathbf{R}_X$ is given, the optimal receive beamformer  $\mathbf{w}^{*}$ that maximizes the \ac{scnr} can be obtained by solving the equivalent \ac{mvdr} problem\cite{capon1969high}.
 The closed-form optimal solution is given by 
	\begin{equation}\label{w_calculate}
{\mathbf{w}^{*}}=\frac{{{\left(\boldsymbol{\Sigma} +\sigma _{s}^{2}{{\mathbf{I}}_{N}} \right)}^{-1}}{{\mathbf{g}}_{r0}}}{\mathbf{g}_{r0}^{H}{{\left( \boldsymbol{\Sigma} +\sigma _{s}^{2}{{\mathbf{I}}_{N}} \right)}^{-1}}{{\mathbf{g}}_{r0}}},
	\end{equation}
	where $\boldsymbol{\Sigma} \triangleq \sum\nolimits_{q=1}^{Q-1}{\alpha _{q}^{2}{{\mathbf{g}}_{rq}}\mathbf{g}_{tq}^{H}{{\mathbf{R}}_{X}}{{\mathbf{g}}_{tq}}\mathbf{g}_{rq}^{H}}$.
It is worth noting that once the optimal transmit covariance matrix $\mathbf{R}_{X}$ is determined,  the optimal receive beamformer ${\mathbf{w}^{*}}$ can be readily computed using the closed-form expression in \eqref{w_calculate}. 
  Therefore, in this work, we concentrate on optimizing the transmit beamforming while employing a fixed receive beamformer $\mathbf{w}$ to avoid the more complex joint transmit-receive beamforming design.


	\subsection{Problem Formulation}
    We jointly optimize the transmit and receive beamformers  to maximize the achievable communication spectral efficiency while satisfying the following constraints: (i) the total transmit power budget, (ii) the hardware limitations imposed by the group-connected architecture and constant-modulus phase shifters, and (iii) the minimum required sensing \ac{scnr} for reliable target detection.
	Based on \eqref{cov_matrix_RX}, (\ref{comm_rate}) and (\ref{sensing_scnr}), the optimization problem can be given by
    \begin{figure*}[b!]
        
    \end{figure*}
\begin{subequations}\label{orig_optimization_problem}
		\begin{align}
			\label{original_objective}
			\max \limits_{{{\mathbf{W}}_{\text{BB}}},{{\mathbf{W}}_{\text{RF}}}, \mathbf{w} }\! 
			&\!  \log \det\! \left( \! \mathbf{I}\! +\! \frac{1}{\sigma_c^2}{{\mathbf{H}}_{c}}{{\mathbf{W}}_{\text{RF}}}\! {{\mathbf{W}}_{\text{BB}}}\! \mathbf{W}_{\text{BB}}^{H}\mathbf{W}_{\text{RF}}^{H}\mathbf{H}_{c}^{H}\!  \! \right) \\
			{\rm s.t.}
			\label{original_pt_constraint}
			& \left\| {{\mathbf{W}}_{\text{RF}}}{{\mathbf{W}}_{\text{BB}}} \right\|_{F}^{2}\le{{N}_{s}},  \\
			\label{original_analog_constraint}
			&  {{\mathbf{W}}_{\text{RF}}}\in {{\mathcal{A}}_{F}},  \\
    \label{sensing_scnr_constraint}
			&\frac{{{\mathbf{w}}^{H}}{{\mathbf{G}}_{0}}\left( {{\mathbf{W}}_{\text{RF}}}{{\mathbf{W}}_{\text{BB}}}\mathbf{W}_{\text{BB}}^{H}\mathbf{W}_{\text{RF}}^{H}\right)\mathbf{G}_{0}^{H}\mathbf{w}}{{{\mathbf{w}}^{H}}\left( \sum\nolimits_{q=1}^{Q-1}{{{\mathbf{G}}_{q}}\left( {{\mathbf{W}}_{\text{RF}}}{{\mathbf{W}}_{\text{BB}}}\mathbf{W}_{\text{BB}}^{H}\mathbf{W}_{\text{RF}}^{H}\right)\mathbf{G}_{q}^{H}}+\sigma _{s}^{2}{{\mathbf{I}}_{N}} \right)\mathbf{w}} \geq {{\Gamma }_{s}},
		\end{align}
	\end{subequations}
	where ${{\mathcal{A}}_{F}}$ is the set of block matrices, with each block being an $M \times M_{\text{RF}}$ matrix with constant-magnitude entries.
	However, due to the specific block-diagonal structure
	of the analog beamformer in ${\mathcal{A}}_{F}$, this type of joint analog-digital optimization problem is non-convex and intractable. 

\section{Low-Complexity  Hybrid Beamforming Design} \label{algorithm_design}
In this section, we first analyze the optimal transmit covariance matrix structure and obtain the closed-form analog beamformer.
Building upon this, we propose two efficient algorithms to optimize the digital beamformer: a joint Riemannian-Euclidean gradient descent method and an \ac{sdr}-based randomization approach.

  \subsection{The Optimal Waveform Covariance Matrix} 
   Before gaining an insight into the solution of the joint analog-digital beamforming optimization problem (\ref{orig_optimization_problem}), we first introduce the optimal transmit covariance matrix $\mathbf{R}_X^{*}$ for achieving the maximum communication spectral efficiency under the sensing \ac{scnr} and transmit power constraints:
   \begin{equation} \label{SDP_Rx}
   	\begin{aligned}
   		\mathbf{R}_X^{*}= &\max \limits_{\mathbf{R}_X }
   		 \ \ \  C \\ 
   		{\rm s.t.}
   		\ \  \ &\text{tr}\left({\mathbf{R}_X }\right) \leq{{N}_{s}},  
   		\ \ {{\mathbf{R}}_{X}}\succeq 0,
   		\ \  {{\gamma }_{s}} \geq {{\Gamma }_{s}}.
   	\end{aligned}
   \end{equation}
   Building upon this, the method for designing the optimal analog and digital beamformers is proposed in the subsequent subsection.
   
   To provide an optimal solutions to the problem  (\ref{SDP_Rx}), we propose to exploit the structure of  the communication channel and sensing array response vectors based on  the piecewise-far-field channel model, respectively.
   Namely, we leverage the following observations:
  
   \subsubsection{Structure of the communication channel} \label{comm_channel_structure}
   We begin by performing the \ac{svd} of the communication channel ${{\mathbf{H}}_{c}}={{\mathbf{U}}_{c}}{{\mathbf{\Sigma }}_{c}}{{\mathbf{V}}_{c}}^{H}$, where ${\mathbf{U}}_{c}\in\mathbb{C}^{N_c\times KN_p}$ and ${\mathbf{V}}_{c}\in\mathbb{C}^{N\times KN_p}$ are unitary matrices, and ${{\mathbf{\Sigma }}_{c}}\in\mathbb{C}^{KN_p\times KN_p}$ is a diagonal matrix of singular values arranged in decreasing order.
   The columns of the unitary matrix ${\mathbf{V}}_{c}$ form an orthonormal basis for the ${\mathbf{H}}_{c}$'s row space.
   Besides, according to Lemma \ref{Hc_rank} and its proof, we note that the $KN_p$ linearly independent vectors $\tilde{\mathbf{a}}_{tp}^{k}\left( \theta _{tp}^{k} \right),\forall k, p$ form another minimal basis for the $\mathbf{H}_c$'s row space.
   Therefore, the columns of ${\mathbf{V}}_{c}$ can be written as linear combinations of $\tilde{\mathbf{a}}_{tp}^{k}\left( \theta _{tp}^{k} \right),\forall k, p$, i.e.,
     \begin{equation} \label{Vc_AT}
            \setlength{\abovedisplayskip}{2pt}
	\setlength{\belowdisplayskip}{4pt}
{{\mathbf{V}}_{c}}=\mathbf{\tilde{A}}_c\mathbf{T},
   \end{equation}
   where $\mathbf{T}\in\mathbb{C}^{KN_p\times KN_p}$ represents the linear transformation matrix, and $\mathbf{\tilde{A}}_c\in\mathbb{C}^{N\times KN_p}$ is a matrix formed by combining $\tilde{\mathbf{a}}_{tp}^{k}\left( \theta _{tp}^{k} \right),\forall k, p$ as column vectors.

   \subsubsection{Structure of Sensing Array Response Vectors}	 \label{sensing_structure}
 The subarray response vectors $\mathbf{a}_{tq}^{1}(\mathbf{l}q),\ldots,\mathbf{a}_{tq}^{K}(\mathbf{l}_q)$ for a given sensing object $q$ are linearly independent due to the widely-spaced subarray configuration, which results in different subarrays observing the target from distinct \ac{aod}.
   Therefore, each sensing array response vector ${{\mathbf{g}}_{tq}}$ can be expressed as a fixed linear combination of $K$ linearly independent vectors as
\begin{equation}
       \setlength{\abovedisplayskip}{3pt}
	\setlength{\belowdisplayskip}{1pt}
    \mathbf{\bar{a}}_{tq}^{k}{{\left( \phi _{tq}^{k} \right)}}\triangleq [\  \underbrace{\mathbf{0}, \ldots , \mathbf{0}}_{k-1},\mathbf{a}_{tq}^{k}{{\left( \phi _{tq}^{k} \right)}^{H}}, \underbrace{\mathbf{0}, \ldots , \mathbf{0}}_{K-k} \ ], \in \mathbb{C}^{1\times KM},
\end{equation}
where $k=1, 2, \cdots, K$, $q=0, 1, \cdots, Q-1$, and $\mathbf{0}$ is an all-zero vector of dimension $1\times M$.

  Then, we have the following new representation of the sensing array response vector:
   \begin{equation}\label{gt_Anu}
          \setlength{\abovedisplayskip}{2pt}
	\setlength{\belowdisplayskip}{2pt}
   	\begin{aligned}
   		{{\mathbf{g}}_{tq}}={{\mathbf{\bar{A}}}_{tq}} {\boldsymbol{\nu }}_{tq},\  \forall q,
   	\end{aligned}
   \end{equation}
 where
  \begin{equation}
  	\begin{aligned}
  		{{\mathbf{\bar{A}}}_{tq}}=\left[ \begin{matrix}
  			\mathbf{\bar{a}}_{tq}^{1}\left( \phi _{tq}^{1} \right),& \mathbf{\bar{a}}_{tq}^{2}\left( \phi _{tq}^{2} \right),& \ldots,& \mathbf{\bar{a}}_{tq}^{K}\left( \phi _{tq}^{K} \right)  \\
  		\end{matrix} \right].
  	\end{aligned}
  \end{equation}

	\subsubsection{Joint Representation of Communication Channel and Sensing Array Response Vectors}
	To establish the joint representation of the communication channel and sensing array response vectors, we first construct a set consisting of  communication and sensing subarray response vectors, which can be given by
	\begin{equation}
		\begin{aligned}
			\mathbf{U}=\left[ {{{\mathbf{\bar{A}}}}_{t1}},\ldots ,{{{\mathbf{\bar{A}}}}_{tQ}},{{{\mathbf{\tilde{A}}}}_{c}} \right]\in {{\mathbb{C}}^{N\times K\left( Q+{{N}_{p}} \right)}}.
		\end{aligned}
	\end{equation}
	Then, based on the observations \eqref{Vc_AT} and \eqref{gt_Anu}, we can obtain
	\begin{equation} \label{gt_linearcom}
			 {{\mathbf{g}}_{t}}\left( {{\mathbf{l}}_{q}} \right)=\mathbf{U}\mathbf{\tilde{ {\boldsymbol{\nu }}}}_{tq} , \  \forall q,
	\end{equation}
	\begin{equation} \label{Vc_linearcom}
	\setlength{\belowdisplayskip}{0.5pt}
	   {{\mathbf{V}}_{c}} =\mathbf{U\tilde{T}},
	\end{equation}
	where
	\begin{equation}
		\mathbf{\tilde{ {\boldsymbol{\nu }}}}_{tq} ={{\left[ \mathbf{0}_{K}^{T},\ldots , {\boldsymbol{\nu }}_{tq}^{T},\ldots ,\mathbf{0}_{K}^{T} \right]}^{T}}\in {{\mathbb{C}}^{K\left( Q+{{N}_{p}} \right)\times 1}},
	\end{equation}
	\begin{equation}
		\begin{aligned}
			\mathbf{\tilde{T}}& =\left[ \begin{aligned}
				& {{\mathbf{0}}_{KQ\times K{{N}_{p}}}} \\ 
				& \mathbf{T} \\ 
			\end{aligned} \right]\in {{\mathbb{C}}^{K\left( Q+{{N}_{p}} \right)\times K{{N}_{p}}}}. 
		\end{aligned}
	\end{equation}

   Based on the above observations, we can obtain the structure of the optimal transmit covariance matrix for the problem (\ref{SDP_Rx}), as provided in the theorem below.
	\begin{theorem} \label{optimal_structure_RX}
		\emph{The optimal transmit waveform covariance matrix $\mathbf{R}_X^{*}$  can be written in the form as
		\begin{equation} \label{optimal_RX}		\mathbf{R}_{X}^{*}=\mathbf{\tilde{U}\Lambda }{{\mathbf{\tilde{U}}}^{H}},
		\end{equation}
	where $\mathbf{\Lambda}\in\mathbb{C}^{K(Q+N_p)\times K(Q+N_p)}$ is a positive semi-definite matrix, and $\mathbf{\tilde{U}}$ is a block diagonal matrix obtained from  $\mathbf{{U}}$ by column permutations, i.e.,
	\begin{equation}\label{tilde_U}
		\begin{aligned}
			\mathbf{\tilde{U}}& =\mathbf{UP}=\left[ \begin{matrix}
				{{{\mathbf{\tilde{U}}}}_{1}} & \cdots  & {{{\mathbf{\tilde{U}}}}_{K}}  \\
			\end{matrix} \right] \\ 
			& =\left[ \begin{matrix}
				{{\mathbf{A}}_{11}} & \mathbf{0} & \cdots  & \mathbf{0}  \\
				\mathbf{0} & {{\mathbf{A}}_{22}} & \cdots  & \mathbf{0}  \\
				\vdots  & \vdots  & \ddots  & \vdots   \\
				\mathbf{0} & \mathbf{0} & \cdots  & {{\mathbf{A}}_{KK}}  \\
			\end{matrix} \right], 
		\end{aligned}
	\end{equation}
    where $\mathbf{P}$ is a permutation matrix of size $K(Q+N_p)$, and $\forall k\in \left\{1,\ldots,K\right\}$ we have
    \begin{subequations} \label{tlide_Uk}
           \setlength{\abovedisplayskip}{3pt}
	\setlength{\belowdisplayskip}{3pt}
    	\begin{align}
    		\label{tilde_U_k}
    		{{\mathbf{\tilde{U}}}_{k}}\!&=\!\left[ \mathbf{\bar{a}}_{t0}^{k}\!\left( \phi _{t0}^{k} \right)\!,\ldots ,\mathbf{\bar{a}}_{tQ-1}^{k}\!\left(\! \phi _{tQ-1}^{k}\! \right)\!,\mathbf{\tilde{a}}_{t1}^{k}\!\left( \theta _{t1}^{k} \right)\!,\ldots ,\mathbf{\tilde{a}}_{t{{N}_{p}}}^{k}\!\left(\! \theta _{t{{N}_{p}}}^{k}\! \right) \!\right],\\
    		\label{A_kk_diagonal}
    		{{\mathbf{A}}_{kk}}\!&=\!\left[ \mathbf{a}_{t0}^{k}\!\left( \phi _{t0}^{1} \right)\!,\ldots ,\mathbf{a}_{tQ-1}^{k}\!\left( \phi _{tQ-1}^{1} \right)\!,\mathbf{a}_{t1}^{k}\!\left( \theta _{t1}^{k} \right)\!,\ldots ,\mathbf{a}_{t{{N}_{p}}}^{k}\!\left(\!\theta _{t{{N}_{p}}}^{k}\!\right)\!\right].
    	\end{align}
    \end{subequations}
	      }
	\end{theorem}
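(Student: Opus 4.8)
The plan is to exploit the fact that both quantities appearing in problem~\eqref{SDP_Rx} — the spectral efficiency $C$ and the sensing \ac{scnr} $\gamma_s$ — "see" the transmit covariance $\mathbf{R}_X$ only through its restriction to the column space of $\mathbf{U}$, so that projecting any feasible covariance onto that subspace is lossless. Concretely, first I would combine the \ac{svd} $\mathbf{H}_c = \mathbf{U}_c\mathbf{\Sigma}_c\mathbf{V}_c^{H}$ with \eqref{Vc_linearcom} to write $\mathbf{H}_c = \mathbf{U}_c\mathbf{\Sigma}_c\tilde{\mathbf{T}}^{H}\mathbf{U}^{H}$, and recall from \eqref{gt_linearcom} that $\mathbf{g}_{tq}\in\mathrm{range}(\mathbf{U})$ for every $q$. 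Letting $\mathbf{\Pi}\triangleq\mathbf{U}\mathbf{U}^{\dagger}$ denote the orthogonal projector onto $\mathrm{range}(\mathbf{U})$, these two facts give $\mathbf{H}_c\mathbf{\Pi} = \mathbf{H}_c$ and $\mathbf{\Pi}\mathbf{g}_{tq} = \mathbf{g}_{tq}$ for all $q$.

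Next I would use this to prove invariance. Since $\mathbf{G}_q = \alpha_q\mathbf{g}_{rq}\mathbf{g}_{tq}^{H}$, one has $\mathbf{G}_q\mathbf{R}_X\mathbf{G}_q^{H} = \alpha_q^{2}\left(\mathbf{g}_{tq}^{H}\mathbf{R}_X\mathbf{g}_{tq}\right)\mathbf{g}_{rq}\mathbf{g}_{rq}^{H}$, so every $\mathbf{R}_X$-dependent term in the \ac{scnr} numerator, in $\boldsymbol{\Sigma}$, and hence in the \ac{mvdr} beamformer \eqref{w_calculate}, depends on $\mathbf{R}_X$ only through the scalars $\mathbf{g}_{tq}^{H}\mathbf{R}_X\mathbf{g}_{tq}$; likewise $C$ depends on $\mathbf{R}_X$ only through $\mathbf{H}_c\mathbf{R}_X\mathbf{H}_c^{H}$. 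Defining $\tilde{\mathbf{R}}_X\triangleq\mathbf{\Pi}\mathbf{R}_X\mathbf{\Pi}$ and using $\mathbf{H}_c\mathbf{\Pi} = \mathbf{H}_c$, $\mathbf{\Pi}\mathbf{g}_{tq} = \mathbf{g}_{tq}$, $\mathbf{\Pi} = \mathbf{\Pi}^{H}$, I obtain $\mathbf{H}_c\tilde{\mathbf{R}}_X\mathbf{H}_c^{H} = \mathbf{H}_c\mathbf{R}_X\mathbf{H}_c^{H}$ and $\mathbf{g}_{tq}^{H}\tilde{\mathbf{R}}_X\mathbf{g}_{tq} = \mathbf{g}_{tq}^{H}\mathbf{R}_X\mathbf{g}_{tq}$, so $C$ and $\gamma_s$ are unchanged. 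Feasibility is preserved because $\tilde{\mathbf{R}}_X\succeq\mathbf{0}$ and, writing $\mathbf{R}_X = \mathbf{L}\mathbf{L}^{H}$, $\mathrm{tr}(\tilde{\mathbf{R}}_X) = \lVert\mathbf{\Pi}\mathbf{L}\rVert_F^{2}\le\lVert\mathbf{L}\rVert_F^{2} = \mathrm{tr}(\mathbf{R}_X)\le N_s$.

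Then I would finish as follows. Starting from any optimizer of \eqref{SDP_Rx}, the projected matrix $\tilde{\mathbf{R}}_X$ is still optimal — call it $\mathbf{R}_X^{*}$ — and satisfies $\mathrm{range}(\mathbf{R}_X^{*})\subseteq\mathrm{range}(\mathbf{U})$, hence $\mathbf{R}_X^{*} = \mathbf{\Pi}\mathbf{R}_X^{*}\mathbf{\Pi} = \mathbf{U}\mathbf{B}\mathbf{U}^{H}$ with $\mathbf{B} = \mathbf{U}^{\dagger}\mathbf{R}_X^{*}\left(\mathbf{U}^{\dagger}\right)^{H}\succeq\mathbf{0}$. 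Because every column of $\mathbf{U} = \left[\bar{\mathbf{A}}_{t1},\ldots,\bar{\mathbf{A}}_{tQ},\tilde{\mathbf{A}}_c\right]$ is supported on exactly one length-$M$ block of rows (the vectors $\bar{\mathbf{a}}_{tq}^{k}$ and $\tilde{\mathbf{a}}_{tp}^{k}$ vanish outside block $k$), there is a permutation matrix $\mathbf{P}$ that reorders the columns by block index so that $\tilde{\mathbf{U}} = \mathbf{UP}$ is block diagonal as in \eqref{tilde_U} with diagonal blocks \eqref{tlide_Uk}; setting $\mathbf{\Lambda} = \mathbf{P}^{H}\mathbf{B}\mathbf{P}\succeq\mathbf{0}$ then yields $\mathbf{R}_X^{*} = \tilde{\mathbf{U}}\mathbf{\Lambda}\tilde{\mathbf{U}}^{H}$, which is \eqref{optimal_RX}.

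The hard part will be making the invariance step airtight rather than the final reshaping: one must verify that \emph{every} occurrence of $\mathbf{R}_X$ in $C$ and in $\gamma_s$ — objective, numerator, each of the $Q-1$ interference terms, and the receive beamformer (whether kept fixed or taken as the \ac{mvdr} solution) — genuinely factors through $\mathbf{U}^{H}\mathbf{R}_X\mathbf{U}$, which is exactly what the two simultaneous representations \eqref{gt_linearcom} and \eqref{Vc_linearcom} over the common matrix $\mathbf{U}$ provide; once that is in place, the trace bound and the column-permutation argument are routine.
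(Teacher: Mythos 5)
Your proposal is correct and follows essentially the same route as the paper's Appendix B: both project $\mathbf{R}_X$ onto the column space of $\mathbf{U}$ (equivalently $\tilde{\mathbf{U}}=\mathbf{UP}$), use the representations \eqref{gt_linearcom} and \eqref{Vc_linearcom} to show that the spectral efficiency and every $\mathbf{g}_{tq}^{H}\mathbf{R}_X\mathbf{g}_{tq}$ term in the \ac{scnr} are unaffected by the component orthogonal to that subspace, and then argue via the trace that this residual component only wastes power and can be set to zero. Your phrasing of the last step (projecting any optimizer and checking feasibility is preserved, rather than the paper's decomposition of $\boldsymbol{\Delta}$ into $\mathbf{P}_{\tilde{U}}\boldsymbol{\Delta}+\mathbf{P}_{\tilde{U}}^{\perp}\boldsymbol{\Delta}$) is a cosmetic variant of the same argument.
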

	\begin{proof}
		Please see Appendix \ref{optimal_structure_RX_proof}.
	\end{proof}
  
  Theorem \ref{optimal_structure_RX} demonstrates that the  optimal waveform covariance matrix $\mathbf{R}_X^{*}$ belongs to the column space of the block diagonal matrix ${\mathbf{\tilde{U}}}$, which is also the subspace spanned by the communication and sensing subarray response vectors.
  In the following subsection, we exploit the structure of $\mathbf{R}_X^{*}$ to design the optimal analog and digital beamformers for the problem (\ref{orig_optimization_problem}).

 \subsection{ Equivalent Low-dimensional Optimization Problem}\label{equiv_pro}
 
To fully exploit the spatial multiplexing gain, we transmit $N_s=\mathrm{rank}(\mathbf{H}_c)$ independent data streams to the user.
 In addition, in the \ac{mmwave}/sub-THz band, where reflection and scattering losses are significant, the contributions of high-order reflection and scattering paths can be neglected, leading to a small value of $N_p$. 
 Therefore, in scenarios where the number of subarrays and the number of interferences are limited, it is reasonable to consider that the available RF chains are sufficient and exceed $K(N_p+Q)$.
 Consequently, to enable sufficient spatial \ac{dof}s for both communication and sensing, we  activate $M_{\text{RF}}=(N_p+Q)$ \ac{rf} chains per subarray, yielding a total of $N_{\text{RF}}=K(N_p+Q)$ RF chains in the system.

 Based on the above and  Theorem \ref{optimal_structure_RX}, we have the following lemma for obtaining the optimal analog beamformer for problem (\ref{orig_optimization_problem}).
 \begin{lemma} \label{optimal_WBF}
     \emph{ The optimal analog beamformer $\mathbf{W}_{\text{RF}}^*$ of problem 
(\ref{orig_optimization_problem})  can be expressed as
 \begin{equation} \label{w_rf_opt}
   \setlength{\abovedisplayskip}{3pt}
	\setlength{\belowdisplayskip}{3pt}
{\mathbf{W}_{\text{RF}}^{*}}=\mathbf{\tilde{U}}.
 \end{equation}
}
\end{lemma}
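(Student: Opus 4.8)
The plan is to show that setting $\mathbf{W}_{\text{RF}}^{*}=\tilde{\mathbf{U}}$ is without loss of optimality by a two-way argument: first, $\tilde{\mathbf{U}}$ is a feasible analog beamformer under the constant-modulus block-diagonal constraint of $\mathcal{A}_F$; second, with this choice the digital beamformer can realize \emph{any} covariance matrix of the form $\tilde{\mathbf{U}}\boldsymbol{\Lambda}\tilde{\mathbf{U}}^H$, so in particular the optimal $\mathbf{R}_X^{*}$ from Theorem~\ref{optimal_structure_RX} is attainable, and hence the hybrid architecture with $\mathbf{W}_{\text{RF}}^{*}=\tilde{\mathbf{U}}$ loses nothing relative to the fully-digital benchmark~\eqref{SDP_Rx}.

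For feasibility, I would examine the structure~\eqref{tilde_U}: $\tilde{\mathbf{U}}=\mathbf{U}\mathbf{P}$ is block diagonal with $k$-th diagonal block $\mathbf{A}_{kk}\in\mathbb{C}^{M\times(N_p+Q)}$, whose columns are the intra-subarray steering vectors $\mathbf{a}_{tq}^{k}$ and $\mathbf{a}_{tp}^{k}$. Every entry of such a steering vector has the form $e^{-j\frac{2\pi}{\lambda}(m-1)d\sin(\cdot)}$, which has unit modulus; the reference entry is $1$, also unit modulus. Hence each block has constant-magnitude entries and $\tilde{\mathbf{U}}$ has exactly the block-diagonal shape with $M\times M_{\text{RF}}$ blocks required by $\mathcal{A}_F$, once we recall $M_{\text{RF}}=N_p+Q$ from the preceding paragraph. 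So $\tilde{\mathbf{U}}\in\mathcal{A}_F$.

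For sufficiency, fix the optimal $\mathbf{R}_X^{*}=\tilde{\mathbf{U}}\boldsymbol{\Lambda}\tilde{\mathbf{U}}^H$ with $\boldsymbol{\Lambda}\succeq\mathbf{0}$ of size $K(Q+N_p)=N_{\text{RF}}$. Write an eigendecomposition $\boldsymbol{\Lambda}=\mathbf{Q}\mathbf{D}\mathbf{Q}^H$ and set $\mathbf{W}_{\text{BB}}=\mathbf{Q}\mathbf{D}^{1/2}$ (truncated to $N_s$ columns, padding with zeros if $N_s<N_{\text{RF}}$, which is the generic case since $N_s=\mathrm{rank}(\mathbf{H}_c)\le KN_p\le N_{\text{RF}}$). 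Then $\mathbf{W}_{\text{RF}}^{*}\mathbf{W}_{\text{BB}}\mathbf{W}_{\text{BB}}^H\mathbf{W}_{\text{RF}}^{*H}=\tilde{\mathbf{U}}\boldsymbol{\Lambda}\tilde{\mathbf{U}}^H=\mathbf{R}_X^{*}$, so both the objective~\eqref{original_objective} and the SCNR constraint~\eqref{sensing_scnr_constraint}, which depend on $(\mathbf{W}_{\text{RF}},\mathbf{W}_{\text{BB}})$ only through this product, take exactly the optimal values of problem~\eqref{SDP_Rx}. The power constraint~\eqref{original_pt_constraint} reads $\|\mathbf{W}_{\text{RF}}^{*}\mathbf{W}_{\text{BB}}\|_F^2=\mathrm{tr}(\mathbf{R}_X^{*})\le N_s$, which holds because $\mathbf{R}_X^{*}$ is feasible for~\eqref{SDP_Rx}. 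Since~\eqref{SDP_Rx} is the fully-digital relaxation of~\eqref{orig_optimization_problem} (any hybrid product $\mathbf{W}_{\text{RF}}\mathbf{W}_{\text{BB}}\mathbf{W}_{\text{BB}}^H\mathbf{W}_{\text{RF}}^H$ is a valid $\mathbf{R}_X$), its optimum upper-bounds~\eqref{orig_optimization_problem}; the construction attains it, so $\mathbf{W}_{\text{RF}}^{*}=\tilde{\mathbf{U}}$ is optimal.

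The main obstacle is the bookkeeping on dimensions and column counts: one must be careful that the $k$-th diagonal block of $\tilde{\mathbf{U}}$ has width exactly $M_{\text{RF}}=N_p+Q$ (relying on the assumption, stated just before the lemma, that $Q$ and $N_p$ are small enough that the RF budget accommodates this), and that the linear independence claims underlying Theorem~\ref{optimal_structure_RX} and observations~\eqref{Vc_AT}, \eqref{gt_Anu} guarantee $\tilde{\mathbf{U}}$ actually spans the subspace containing $\mathbf{R}_X^{*}$'s range — so that no optimality is sacrificed when we restrict $\mathbf{W}_{\text{RF}}$ to this fixed matrix rather than optimizing over $\mathcal{A}_F$. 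The constant-modulus check and the $\mathbf{W}_{\text{BB}}$ construction are routine; the conceptual content is entirely inherited from Theorem~\ref{optimal_structure_RX}.
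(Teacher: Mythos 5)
Your proof is correct and rests on the same two pillars as the paper's: feasibility of $\tilde{\mathbf{U}}$ under the block-diagonal constant-modulus constraint $\mathcal{A}_F$, and the covariance structure from Theorem~\ref{optimal_structure_RX}. The difference is in which direction of the optimality argument is made explicit. The paper only argues the forward inclusion --- for any $\mathbf{W}_{\text{BB}}$, the product $\tilde{\mathbf{U}}\mathbf{W}_{\text{BB}}\mathbf{W}_{\text{BB}}^{H}\tilde{\mathbf{U}}^{H}$ ``conforms to the structure'' of \eqref{optimal_RX} --- which by itself shows only that fixing $\mathbf{W}_{\text{RF}}=\tilde{\mathbf{U}}$ keeps the achievable covariances inside the optimal family, not that the optimum is reached. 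You supply the converse and logically decisive direction: problem~\eqref{SDP_Rx} upper-bounds problem~\eqref{orig_optimization_problem} because every hybrid product is a feasible $\mathbf{R}_X$, and the optimizer $\mathbf{R}_X^{*}=\tilde{\mathbf{U}}\boldsymbol{\Lambda}\tilde{\mathbf{U}}^{H}$ is attained by the explicit factorization $\mathbf{W}_{\text{BB}}=\mathbf{Q}\mathbf{D}^{1/2}$. That is the tighter argument. One caveat that your construction surfaces (and the paper's looser phrasing hides): realizing $\boldsymbol{\Lambda}=\mathbf{W}_{\text{BB}}\mathbf{W}_{\text{BB}}^{H}$ with only $N_s$ columns requires $\mathrm{rank}(\boldsymbol{\Lambda})\le N_s$, which need not hold for the optimizer of \eqref{SDP_Rx} since $N_s=\mathrm{rank}(\mathbf{H}_c)\le KN_p<K(N_p+Q)=N_{\text{RF}}$; ``truncating to $N_s$ columns'' can then lose optimality, which is exactly why the residual digital problem \eqref{digital_optimization_problem} remains rank-constrained and non-convex. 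So your upper bound is attained only when that rank condition holds, and otherwise both proofs establish $\tilde{\mathbf{U}}$ as optimal only in the weaker sense that restricting to it sacrifices no achievable covariance. This is a defect inherited from the paper's own proof, not one you introduced, but it would be worth stating the rank qualification explicitly rather than tucking it into the truncation step.
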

 \begin{proof}
 	Note that both communication and sensing subarray response vectors in (\ref{A_kk_diagonal}) are constant-magnitude phase-only vectors.
 	Therefore, ${\mathbf{\tilde{U}}}$ satisfies both the diagonal matrix constraint and the constant modulus constraint., i.e., ${\mathbf{\tilde{U}}}\in {{\mathcal{A}}_{F}}$, which indicates that ${\mathbf{\tilde{U}}}$ can be applied as the analog beamformer.
 	It is evident that matrix $\mathbf{W}_{\text{BB}}\mathbf{W}{{_{\text{BB}}}^{H}}$ is positive semi-definite, hence ${\mathbf{\tilde{U}}}\mathbf{W}_{\text{BB}}\mathbf{W}{{_{\text{BB}}}^{H}}{\mathbf{\tilde{U}}}^H$ conforms to the structure of the optimal waveform covariance matrix defined in (\ref{optimal_RX}).
 	Therefore, ${\mathbf{\tilde{U}}}$ can be considered as the optimal analog beamformer, which completes the proof.
 \end{proof}
  
   By substituting (\ref{w_rf_opt}) into (\ref{comm_rate}) and (\ref{sensing_scnr}), the achievable communication spectral efficiency and sensing \ac{scnr} can be respectively rewritten as 
  \begin{equation}
    \setlength{\abovedisplayskip}{3pt}
	\setlength{\belowdisplayskip}{3pt}
  	C=\log \det \left( \mathbf{I}+\frac{1}{\sigma _{c}^{2}}{{\mathbf{H}}_{c}}\mathbf{\tilde{U}}{{\mathbf{W}}_{\text{BB}}}\mathbf{W}_{\text{BB}}^{H}{{{\mathbf{\tilde{U}}}}^{H}}\mathbf{H}_{c}^{H} \right),
  \end{equation}
 \begin{equation}\label{scnr_change}
	{{\gamma }_{s}}=\frac{\alpha _{1}^{2}\text{tr}\left( {{\mathbf{W}}_{\text{BB}}}\mathbf{W}_{\text{BB}}^{H}{{\mathbf{\Phi }}_{0}} \right)}{\sum\nolimits_{q=1}^{Q}{\alpha _{q}^{2}\text{tr}\left( {{\mathbf{W}}_{\text{BB}}}\mathbf{W}_{\text{BB}}^{H}{{\mathbf{\Phi }}_{q}} \right)}+\sigma _{s}^{2}\mathbf{w}{{\mathbf{w}}^{H}}},
\end{equation}
  where ${{\mathbf{\Phi }}_{q}}\triangleq{{\mathbf{\tilde{U}}}^{H}}{{\mathbf{g}}_{tq}}\mathbf{g}_{rq}^{H}\mathbf{w}{{\mathbf{w}}^{H}}{{\mathbf{g}}_{rq}}\mathbf{g}_{tq}^{H}\mathbf{\tilde{U}}, q\in \left\{0, \ldots,Q-1\right\}$.
 Moreover, due to the block diagonal structure of $\mathbf{{W}}_{\text{RF}}$, each non-zero element of $\mathbf{W}_{\text{RF}}$ is multiplied with the corresponding row in $\mathbf{{W}}_{\text{BB}}$. Hence, the constraint (\ref{original_pt_constraint}) can be simplified as
 \begin{equation}
 	\left\| {{\mathbf{W}}_{\text{RF}}}{{\mathbf{W}}_{\text{BB}}} \right\|_{F}^{\text{2}}\text{=}M\left\| {{\mathbf{W}}_{\text{BB}}} \right\|_{F}^{\text{2}}\le N_s.
 \end{equation}

 Therefore, the complex joint analog-digital beamforming optimization problem (\ref{orig_optimization_problem}) can be equivalently simplified into a low-dimensional digital beamforming optimization problem with the given optimal analog beamformer, i.e.,
\begin{subequations}\label{digital_optimization_problem}
    \setlength{\abovedisplayskip}{5pt}
	\setlength{\belowdisplayskip}{1pt}
 	\begin{align}
		\label{digital_objective}
  \setlength{\abovedisplayskip}{3pt}
	\setlength{\belowdisplayskip}{3pt}
&\underset{{{\mathbf{W}}_{\text{BB}}}}{\mathop{\max }}\,  \ \ \log \det \left( \mathbf{I}+\frac{1}{\sigma _{c}^{2}}\mathbf{W}_{\text{BB}}^{H}{{{\mathbf{\tilde{U}}}}^{H}}\mathbf{H}_{c}^{H}{{\mathbf{H}}_{c}}\mathbf{\tilde{U}}{{\mathbf{W}}_{\text{BB}}} \right)  \\
  \label{cons_power_WBB}
   &\text{s}.\text{t}. \ \ \text{tr}\left( {{\mathbf{W}}_{\text{BB}}}\mathbf{W}_{\text{BB}}^{H} \right)\le \frac{{{N}_{s}}}{M},  \\
   \label{change_scnr_cons}
    &\ \ \ \alpha_{0}^{2}\text{tr}\left( \mathbf{W}_{\text{BB}}^{H}{{\mathbf{\Phi }}_{0}}{{\mathbf{W}}_{\text{BB}}} \right)\text{-}{{\Gamma }_{s}}\sum\limits_{q=1}^{Q-1}{\alpha _{q}^{2}\text{tr}\left( \mathbf{W}_{\text{BB}}^{H}{{\mathbf{\Phi }}_{q}}{{\mathbf{W}}_{\text{BB}}} \right)}\ge {{\Gamma }_{0}},
 	\end{align}
 \end{subequations}
where $\Gamma_0\triangleq{{\Gamma }_{s}}\sigma_{s}^{2}\mathbf{w}{{\mathbf{w}}^{H}}$.
Note that the constraint (\ref{change_scnr_cons}) is an equivalent transformation of the sensing \ac{scnr} constraint in \eqref{sensing_scnr_constraint} based on \eqref{scnr_change}.

Although the digital beamformer optimization problem \eqref{digital_optimization_problem}, has a reduced dimension, it remains non-convex due to the rank constraint imposed by the limited number of data streams.
To tackle this non-convex problem, we propose two distinct algorithms in the following subsections: a manifold optimization method that directly optimizes the digital beamformer on the rank-constrained space and an \ac{sdr}-based method that obtains a near-optimal solution. 
By investigating these two optimization strategies, we provide a comprehensive framework for solving the rank-constrained digital beamformer design problem.

\subsection{Joint Optimization on Riemannian Manifold and Euclidean Space}\label{rieman_algorithm}
In this subsection, we first derive the structure of the optimal solution to problem \eqref{digital_optimization_problem} without relaxing the rank constraint, and obtain the semi-closed-form solution.
Based on this, we then develop a Riemannian joint gradient descent algorithm.

Let $\mathbf{B}\triangleq {{\mathbf{\tilde{U}}}^{H}}\mathbf{H}_{c}^{H}{{\mathbf{H}}_{c}}\mathbf{\tilde{U}}$, and it follows that $\mathbf{B}$ is a Hermitian matrix with $\mathrm{rank}(\mathbf{B})=\min\{N_{\text{RF}}, N_s\} = N_s$.
Performing eigendecomposition on $\mathbf{B}$ and retaining only the non-zero eigenvalues and their corresponding eigenvectors, we obtain $\mathbf{B}={{\mathbf{U}}_{B}}{{\mathbf{\Sigma }}_{B}}\mathbf{U}_{B}^{H}$, where ${{\mathbf{U}}_{B}}\in\mathbb{C}^{N_{\text{RF}}\times N_s}$, and $\mathbf{\Sigma }_{B}\in\mathbb{C}^{N_s\times N_s}$.
Then, we can obtain the structure of the  optimal solution to the problem \eqref{digital_optimization_problem} in the following Lemma.
\begin{lemma} \label{structure_beamformer}
	\emph{ The optimal solution to the problem \eqref{digital_optimization_problem} is given as
 \begin{equation}
     \mathbf{W}_{\text{BB}}^{*}\text{=}{{\mathbf{U}}_{\text{B}}}\mathbf{\Sigma }_{\text{B}}^{-\frac{1}{2}}\mathbf{U}_{\text{B}}^{H}\mathbf{\tilde{V}\tilde{\Sigma }},
 \end{equation}
    where $\mathbf{\tilde{V}}\in\mathbb{C}^{N_{\text{RF}}\times N_{\text{RF}}}$ is an unitary matrix, and $\tilde{\Sigma }$ is a $N_{\text{RF}}\times N_s$ rectangular diagonal matrix which is defined as
    \begin{equation}
        \mathbf{\tilde{\Sigma }}\text{=}\left[ \begin{matrix}
   \mathrm{diag}\left( \mathbf{b} \right)  \\
   {\mathbf{0}}_{{N_{\mathrm{RF}}}-N_s,N_s}  \\
\end{matrix} \right],
    \end{equation}
    with $\boldsymbol{b} = [b_1, b_2, \ldots, b_{N_s}]^T$.
    }	
\end{lemma}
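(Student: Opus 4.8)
The plan is to establish the claimed structure in three moves: (i) use the rank deficiency of $\mathbf{B}$ to collapse the essential degrees of freedom of $\mathbf{W}_{\text{BB}}$ onto the $N_s$-dimensional signal subspace $\mathrm{span}(\mathbf{U}_B)$; (ii) whiten the effective communication channel on that subspace via the change of variables $\mathbf{W}_{\text{BB}}=\mathbf{U}_B\mathbf{\Sigma}_B^{-1/2}\mathbf{U}_B^H\mathbf{F}$; and (iii) exploit the unitary invariance of both the objective and the constraints, together with a singular value decomposition, to reduce the residual freedom to a unitary factor $\mathbf{\tilde{V}}$ and a nonnegative power-loading vector $\mathbf{b}$.

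Concretely, I would first note that $\mathbf{B}=\mathbf{\tilde{U}}^H\mathbf{H}_c^H\mathbf{H}_c\mathbf{\tilde{U}}=\mathbf{U}_B\mathbf{\Sigma}_B\mathbf{U}_B^H$ is Hermitian of rank exactly $N_s$, so the objective in \eqref{digital_objective} depends on $\mathbf{W}_{\text{BB}}$ only through $\mathbf{\Sigma}_B^{1/2}\mathbf{U}_B^H\mathbf{W}_{\text{BB}}$; writing $\mathbf{W}_{\text{BB}}=\mathbf{U}_B\mathbf{U}_B^H\mathbf{W}_{\text{BB}}+(\mathbf{I}-\mathbf{U}_B\mathbf{U}_B^H)\mathbf{W}_{\text{BB}}$, the second summand does not enter \eqref{digital_objective} yet consumes transmit power, so an optimal solution may be sought with $\mathbf{W}_{\text{BB}}\in\mathrm{span}(\mathbf{U}_B)$, i.e., $\mathbf{W}_{\text{BB}}=\mathbf{U}_B\mathbf{\Sigma}_B^{-1/2}\mathbf{U}_B^H\mathbf{F}$ for some $\mathbf{F}\in\mathbb{C}^{N_{\text{RF}}\times N_s}$. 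Using $\mathbf{U}_B^H\mathbf{B}\mathbf{U}_B=\mathbf{\Sigma}_B$, the objective then collapses to $\log\det\!\big(\mathbf{I}+\tfrac{1}{\sigma_c^2}\mathbf{F}^H\mathbf{U}_B\mathbf{U}_B^H\mathbf{F}\big)$, while the power constraint \eqref{cons_power_WBB} and the SCNR constraint \eqref{change_scnr_cons} remain functions of $\mathbf{W}_{\text{BB}}\mathbf{W}_{\text{BB}}^H$ only; all three are therefore invariant under $\mathbf{F}\mapsto\mathbf{F}\mathbf{\Omega}$ for any unitary $\mathbf{\Omega}\in\mathbb{C}^{N_s\times N_s}$.

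I would then take the full SVD $\mathbf{F}=\mathbf{\tilde{V}}\mathbf{\tilde{\Sigma}}\mathbf{\Omega}^H$, with $\mathbf{\tilde{V}}\in\mathbb{C}^{N_{\text{RF}}\times N_{\text{RF}}}$ and $\mathbf{\Omega}\in\mathbb{C}^{N_s\times N_s}$ unitary and $\mathbf{\tilde{\Sigma}}$ the $N_{\text{RF}}\times N_s$ rectangular diagonal matrix whose leading diagonal is the singular-value vector $\mathbf{b}\geq\mathbf{0}$; the right factor $\mathbf{\Omega}^H$ can be absorbed by the invariance just noted, so an optimal $\mathbf{F}$ can be chosen as $\mathbf{F}=\mathbf{\tilde{V}}\mathbf{\tilde{\Sigma}}$, and substituting back gives $\mathbf{W}_{\text{BB}}^{*}=\mathbf{U}_B\mathbf{\Sigma}_B^{-1/2}\mathbf{U}_B^H\mathbf{\tilde{V}}\mathbf{\tilde{\Sigma}}$, which is the claimed form. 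To finish, I would verify feasibility and stationarity: the feasible set of \eqref{digital_optimization_problem} is compact, so a maximizer exists, and forming the Lagrangian with multipliers $\mu\geq 0$ (power) and $\lambda\geq 0$ (SCNR) and reading off the KKT conditions in the $(\mathbf{\tilde{V}},\mathbf{b})$ coordinates confirms that a maximizer of this form is attained, with its exact value left in the semi-closed form that the algorithm developed below computes.

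The step I expect to be the main obstacle is handling the SCNR constraint \eqref{change_scnr_cons}: the matrix $\alpha_0^2\mathbf{\Phi}_0-\Gamma_s\sum_{q=1}^{Q-1}\alpha_q^2\mathbf{\Phi}_q$ is indefinite and is not aligned with the eigenbasis $\mathbf{U}_B$ of $\mathbf{B}$, so the projection onto $\mathrm{span}(\mathbf{U}_B)$ in step (i) and the SVD reduction in step (iii) cannot be justified by the clean per-mode water-filling argument available in the constraint-free case; instead one must argue jointly that discarding the component of $\mathbf{W}_{\text{BB}}$ outside $\mathrm{span}(\mathbf{U}_B)$ does not make the SCNR constraint infeasible, which is where the sufficiency of the RF-chain budget ($M_{\text{RF}}=N_p+Q$, hence $N_{\text{RF}}=K(N_p+Q)$) and the block-diagonal structure of Theorem~\ref{optimal_structure_RX} enter. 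The unitary factor $\mathbf{\tilde{V}}$, which lives on the complex Stiefel manifold, and the power-loading vector $\mathbf{b}$ are the two pieces left implicit by this argument, which is precisely what the joint Riemannian-Euclidean gradient descent of the next subsection is designed to optimize.
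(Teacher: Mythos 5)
Your proposal reaches the claimed representation and is broadly sound, but it takes a different route from the paper and the difference is not purely cosmetic. The paper does not begin by projecting $\mathbf{W}_{\text{BB}}$ onto $\mathrm{span}(\mathbf{U}_B)$; instead it invokes the majorization result $\mathbf{d}\prec\boldsymbol{\lambda}$ for Hermitian matrices together with Schur-concavity to argue that, without loss of optimality, the right-unitary freedom can be spent making $\mathbf{W}_{\text{BB}}^{*H}\mathbf{B}\mathbf{W}_{\text{BB}}^{*}=\mathbf{D}$ diagonal (the constraints being invariant under $\mathbf{W}_{\text{BB}}\mapsto\mathbf{W}_{\text{BB}}\mathbf{M}$ for unitary $\mathbf{M}$, exactly as you observe); it then factors $\mathbf{B}^{1/2}\mathbf{W}_{\text{BB}}^{*}=\tilde{\mathbf{V}}\tilde{\boldsymbol{\Sigma}}$ with $\tilde{\boldsymbol{\Sigma}}^H\tilde{\boldsymbol{\Sigma}}=\mathbf{D}$ and applies the pseudo-inverse square root. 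Your SVD-of-$\mathbf{F}$ argument yields a valid instance of the stated form, but note that it decomposes $\mathbf{F}$ rather than $\mathbf{B}^{1/2}\mathbf{W}_{\text{BB}}=\mathbf{U}_B\mathbf{U}_B^H\mathbf{F}$, so your $\mathbf{b}$ consists of the singular values of $\mathbf{F}$ and the objective retains a dependence on $\tilde{\mathbf{V}}$ through $\tilde{\boldsymbol{\Sigma}}^H\tilde{\mathbf{V}}^H\mathbf{U}_B\mathbf{U}_B^H\tilde{\mathbf{V}}\tilde{\boldsymbol{\Sigma}}$; the paper's choice is what makes the objective collapse to a function of $\mathbf{b}$ alone, which is the property actually exploited in the reformulation \eqref{mainfold_opt_problem}. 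To match the paper you would decompose $\mathbf{B}^{1/2}\mathbf{W}_{\text{BB}}$ directly. Finally, the obstacle you flag---that discarding the component of $\mathbf{W}_{\text{BB}}$ in the null space of $\mathbf{B}$ is not obviously harmless for the indefinite SCNR constraint---is real, but it is not resolved in the paper either: writing $\mathbf{W}_{\text{BB}}^{*}=\mathbf{B}^{-1/2}\tilde{\mathbf{V}}\tilde{\boldsymbol{\Sigma}}$ with the pseudo-inverse silently restricts the solution to $\mathrm{range}(\mathbf{B})$, so on this point your treatment is, if anything, more candid than the paper's.
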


\begin{proof}
 Please see Appendix \ref{optimal_structure_WBB_proof}.
\end{proof}
It can be observed that the unitary matrix $\mathbf{\tilde{V}}$ forms a complex Stiefel mainifold $\mathcal{M}_s = \{\mathbf{\tilde{V}}\in\mathbb{C}^{N_{\text{RF}}\times N_{\text{RF}}}: \mathbf{\tilde{V}}^H\mathbf{\tilde{V}}=\mathbf{I}_{N_{\text{RF}}}\}$.
Therefore, the problem \eqref{digital_optimization_problem} can be rewritten as 
\begin{subequations}\label{mainfold_opt_problem}
    \setlength{\abovedisplayskip}{4pt}
	\setlength{\belowdisplayskip}{1pt}
    \begin{align}
    \label{object_func}
    \underset{\mathbf{\tilde{V}},\mathbf{b}}{\mathop{\max }}\, & \ \ \log \det \left( \mathbf{I}+\mathbf{\tilde{\Sigma }}{{{\mathbf{\tilde{\Sigma }}}}^{H}} \right)  \\[-3mm]
        \label{constraint_power}
   \text{s}.\text{t}. & \ \ \text{tr}\left( {{{\mathbf{\tilde{V}}}}^{H}}\mathbf{\tilde{B}\tilde{V}\tilde{\Sigma }}{{{\mathbf{\tilde{\Sigma }}}}^{H}} \right)\le \frac{{{N}_{s}}}{M},  \\
    \label{constraint_scnr}
   {} & \ \ \text{tr}\left( {{{\mathbf{\tilde{V}}}}^{H}}\mathbf{\tilde{\Phi }\tilde{V}\tilde{\Sigma }}{{{\mathbf{\tilde{\Sigma }}}}^{H}} \right)\ge {{\Gamma }_{s}}\sigma _{s}^{2}\mathbf{w}{{\mathbf{w}}^{H}},  \\[-1mm]
   {} & \ \ \mathbf{\tilde{V}} \in \mathcal{M}_s,\\[-1mm]
   {} &\ \  \mathbf{b} \in \mathbb{R}^r,
    \end{align}
\end{subequations}
where
\begin{equation}
    \mathbf{\tilde{B}} = {{\mathbf{U}}_{\text{B}}}\mathbf{\Sigma }_{\text{B}}^{-1}\mathbf{U}_{\text{B}}^{H},
\end{equation}
\begin{equation}
    \mathbf{\tilde{\Phi }}\triangleq{{\mathbf{U}}_{\text{B}}}\mathbf{\Sigma }_{\text{B}}^{-\frac{1}{2}}\mathbf{U}_{\text{B}}^{H}\left( \alpha _{0}^{2}{{\mathbf{\Phi }}_{0}}-{{\Gamma }_{s}}\sum\nolimits_{q=1}^{Q-1}{\alpha _{q}^{2}{{\mathbf{\Phi }}_{q}}} \right){{\mathbf{U}}_{\text{B}}}\mathbf{\Sigma }_{\text{B}}^{-\frac{1}{2}}\mathbf{U}_{\text{B}}^{H}.
\end{equation}

We then  use the barrier method to make the inequality constraints \eqref{constraint_power} and \eqref{constraint_scnr}  implicit
in the objective function \eqref{object_func}.
Thus, we have
\begin{equation}
    \setlength{\abovedisplayskip}{4pt}
	\setlength{\belowdisplayskip}{2pt}
\begin{aligned}
    f\left( \mathbf{\tilde{V}}, \mathbf{b} \right) &=-\log \det \left( \mathbf{I}+\mathbf{\tilde{\Sigma }}{{{\mathbf{\tilde{\Sigma }}}}^{H}} \right)\\
    &+\phi \left(\frac{{{N}_{s}}}{M} - \text{tr}\left( {{{\mathbf{\tilde{V}}}}^{H}}\mathbf{\tilde{B}\tilde{V}\tilde{\Sigma }}{{{\mathbf{\tilde{\Sigma }}}}^{H}} \right)\right)\\
    &+\phi \left( \text{tr}\left( {{{\mathbf{\tilde{V}}}}^{H}}\mathbf{\tilde{\Phi }\tilde{V}\tilde{\Sigma }}{{{\mathbf{\tilde{\Sigma }}}}^{H}} \right)-\Gamma_0 \right),
\end{aligned}
\end{equation}
where  $\phi(u)$ is the logarithmic barrier function, i.e.,
\begin{equation}
    \phi(u)= \begin{cases}-\frac{1}{t} \ln (u), & u>0 \\ +\infty, & u \leq 0\end{cases},
\end{equation}
with $t$ barrier parameter $t > 0$.

Consequently,  problem \eqref{mainfold_opt_problem} can be simplified to an unconstrained optimization problem shown  as  below:
\begin{subequations} \label{uncons_problem}
    \setlength{\abovedisplayskip}{4pt}
	\setlength{\belowdisplayskip}{3pt}
    \begin{align}
    \underset{\mathbf{\tilde{V}},\mathbf{b} }{\mathop{\min }}\, & \ \ f\left( \mathbf{\tilde{V}},\mathbf{b} \right)  \\
   \text{s}.\text{t}. & \ \ \text{ }\mathbf{\tilde{V}}\in {{\mathcal{M}}_{s}},\ \  \mathbf{b} \in \mathbb{R}^{N_s}.
    \end{align}
\end{subequations}
 It can be observed that the variables $\mathbf{\tilde{V}}$ and $\mathbf{b}$ are coupled in the objective function. 
 To account for the interaction between these variables, we engage in the simultaneous optimization of both  $\mathbf{\tilde{V}}$ and $\mathbf{b}$, enabling coordinated updates to enhance convergence efficiency along a more effective path.

 As a first step, we derive the  Euclidean  gradients of the objective function with respect to $\mathbf{\tilde{V}}$ and $\mathbf{b}$, respectively.
The gradient of the objective function $f(\mathbf{\tilde{V}}, \mathbf{b})$ with respect to $\mathbf{b}$  is provided in \eqref{gradient_b} at the top of this page, where $(\cdot)_{1:N_s,1:N_s}$ represents the top-left $N_s \times N_s$ block of a matrix.
 Then, the update of $\mathbf{b}$ at the (n)-th iteration on the Euclidean space is
 \begin{equation}
     \mathbf{b}^{(n+1)} := \mathbf{b}^{(n)}+\delta_{\mathbf{b}}^{(n)} \nabla_{\mathbf{b}}^{(n)} f,
 \end{equation}
where the step size $\delta_{\mathbf{b}}^{(n)}$ is determined by line search algorithms, such as the Armijo rule.

\begin{figure*}[!t] 
    \begin{equation}\label{gradient_b}
    \nabla_{\mathbf{b}} f = -2\left( \left( \mathbf{I}+\mathbf{{\tilde{\Sigma }}}\mathbf{{{{\tilde{\Sigma }}}}^{H}} \right)^{-1}_{1:N_s,1:N_s} \right)\mathbf{b}+
      \frac{2}{t}\left( \frac{\mathrm{diag}{{\left( {{{\mathbf{\tilde{V}}}}^{H}}\mathbf{\tilde{B}\tilde{V}} \right)}_{1:{{N}_{s}}}}}{\left( \frac{{{N}_{s}}}{M}-\text{tr}\left( {{{\mathbf{\tilde{V}}}}^{H}}\mathbf{\tilde{B}\tilde{V}\tilde{\Sigma }}{{{\mathbf{\tilde{\Sigma }}}}^{H}} \right) \right)}-\frac{\mathrm{diag}{{\left( {{{\mathbf{\tilde{V}}}}^{H}}\mathbf{\tilde{\Phi }\tilde{V}} \right)}_{1:{{N}_{s}}}}}{\left( \text{tr}\left( {{{\mathbf{\tilde{V}}}}^{H}}\mathbf{\tilde{\Phi }\tilde{V}\tilde{\Sigma }}{{{\mathbf{\tilde{\Sigma }}}}^{H}} \right)-{{\Gamma }_{0}} \right)} \right)\mathrm{diag}(\mathbf{b}).
    \end{equation}
     \rule[-0pt]{18.5 cm}{0.05em}
\end{figure*}

The Euclidean gradient of objective function w.r.t. $\mathbf{\tilde{V}}$ is obtained by
\begin{equation}
    \begin{aligned}
  {{\nabla }_{{\mathbf{\tilde{V}}}}}f& =\frac{2}{t}\left( \frac{1}{-\text{tr}\left( {{{\mathbf{\tilde{V}}}}^{H}}\mathbf{\tilde{B}\tilde{V}\tilde{\Sigma }}{{{\mathbf{\tilde{\Sigma }}}}^{H}} \right)+\frac{{{N}_{s}}}{M}}\mathbf{\tilde{B}\tilde{V}\tilde{\Sigma }}{{{\mathbf{\tilde{\Sigma }}}}^{H}} \right. \\ 
  & -\left. \frac{\mathbf{\tilde{\Phi }\tilde{V}\tilde{\Sigma }}{{{\mathbf{\tilde{\Sigma }}}}^{H}}}{\text{tr}\left( {{{\mathbf{\tilde{V}}}}^{H}}\mathbf{\tilde{\Phi }\tilde{V}\tilde{\Sigma }}{{{\mathbf{\tilde{\Sigma }}}}^{H}} \right)-{{\Gamma }_{0}}} \right). 
\end{aligned}
\end{equation}
The tangent space for the complex Stiefel manifold is given by
\begin{equation}
    T_{\mathbf{\tilde{V}}} \mathcal{M}_s= \{\mathbf{Z} \in \mathbb{C}^{N_{\text{RF}} \times N_{\text{RF}}}: \mathbf{Z}^H {\mathbf{\tilde{V}}} + {\mathbf{\tilde{V}}}^H \mathbf{Z} = 0\}.
\end{equation}

For each point $\mathbf{\tilde{V}}\in\mathcal{M}_s$, the decent direction $\boldsymbol{\Delta}_{\mathbf{\tilde{V}}}$ is defined as the projection of the Euclidean gradient ${{\nabla }_{{\mathbf{\tilde{V}}}}}f$ onto the tangent space, which is  obtained by 
\begin{equation} \label{discent_gradient}
    \begin{aligned}
       \boldsymbol{\Delta}_{\mathbf{\tilde{V}}} = \mathbf{\tilde{V}}^H{{\nabla }_{{\mathbf{\tilde{V}}}}}f\mathbf{\tilde{V}}-{{\nabla }_{{\mathbf{\tilde{V}}}}}f.
    \end{aligned}
\end{equation}
Thus, the update of $\mathbf{\tilde{V}}$ at the (n)-th iteration on the tangent space can be given by
\begin{equation}
\mathbf{\bar{V}}^{(n+1)}=\mathbf{\tilde{V}}^{(n)}+{\delta}_{\mathbf{\tilde{V}}}^{(n)} \boldsymbol{\Delta}_{\mathbf{\tilde{V}}}^{(n)},
\end{equation}
where  ${\delta}_{\mathbf{\tilde{V}}}^{(n)}$ is the step size.
However, the new updated point $\mathbf{\bar{V}}^{(n+1)}$ may not necessarily lie on the manifold, necessitating the projection onto the Stiefel manifold.
\begin{proposition}
    \emph{Let $\mathbf{Z} \in \mathbb{C}^{N_{\text{RF}} \times N_{\text{RF}}}$ be a arbitrary matrix. The projection $\mathcal{P_{M_s}}(\mathbf{Z})$ onto the Stiefel manifold is  
    \begin{equation}
        \mathcal{P}_{\mathcal{M}_s}(\mathbf{Z})=\arg \min _{\mathbf{Q} \in \mathcal{M}_s}\|\mathbf{Z}-\mathbf{Q}\|^2.
    \end{equation}
    Additionally, if the \ac{svd} of $\mathbf{Z}$ is $\mathbf{Z}=\mathbf{U}_Z\mathbf{\Sigma}_{Z}\mathbf{V}_Z^H$, then $\mathcal{P}_{\mathcal{M}_s}(\mathbf{Z}) = \mathbf{U}_Z\mathbf{V}_Z^H$.
    }
\end{proposition}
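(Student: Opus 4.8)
The plan is to recognize that $\mathcal{M}_s$ is precisely the group of $N_{\text{RF}}\times N_{\text{RF}}$ unitary matrices, so the claim is the classical nearest‑unitary (orthogonal Procrustes) fact, which I would establish by reducing the matrix nearness problem to a scalar inequality. First I would expand the Frobenius objective as
\begin{equation}
\|\mathbf{Z}-\mathbf{Q}\|^2=\|\mathbf{Z}\|^2+\|\mathbf{Q}\|^2-2\,\mathrm{Re}\,\text{tr}\!\left(\mathbf{Q}^H\mathbf{Z}\right),
\end{equation}
and note that $\|\mathbf{Q}\|^2=\text{tr}(\mathbf{Q}^H\mathbf{Q})=N_{\text{RF}}$ is constant over $\mathbf{Q}\in\mathcal{M}_s$. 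Hence minimizing $\|\mathbf{Z}-\mathbf{Q}\|^2$ over $\mathcal{M}_s$ is equivalent to maximizing $\mathrm{Re}\,\text{tr}(\mathbf{Q}^H\mathbf{Z})$.

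Next I would substitute the \ac{svd} $\mathbf{Z}=\mathbf{U}_Z\mathbf{\Sigma}_Z\mathbf{V}_Z^H$ and use the cyclic property of the trace to obtain $\text{tr}(\mathbf{Q}^H\mathbf{Z})=\text{tr}(\mathbf{M}\mathbf{\Sigma}_Z)$, where $\mathbf{M}\triangleq\mathbf{V}_Z^H\mathbf{Q}^H\mathbf{U}_Z$ is unitary as a product of unitary matrices. Writing $\mathbf{\Sigma}_Z=\mathrm{diag}(\sigma_1,\ldots,\sigma_{N_{\text{RF}}})$ with $\sigma_i\ge 0$ and using that every diagonal entry of a unitary matrix satisfies $|[\mathbf{M}]_{ii}|\le\|\mathbf{m}_i\|=1$, where $\mathbf{m}_i$ is its $i$-th column, I obtain
\begin{equation}
\mathrm{Re}\,\text{tr}(\mathbf{M}\mathbf{\Sigma}_Z)=\sum_{i=1}^{N_{\text{RF}}}\sigma_i\,\mathrm{Re}\,[\mathbf{M}]_{ii}\le\sum_{i=1}^{N_{\text{RF}}}\sigma_i .
\end{equation}
Finally I would verify that $\mathbf{Q}=\mathbf{U}_Z\mathbf{V}_Z^H$ is feasible, being unitary since $\mathbf{Z}$ is square, and attains this bound: for that choice $\mathbf{M}=\mathbf{V}_Z^H\mathbf{V}_Z\mathbf{U}_Z^H\mathbf{U}_Z=\mathbf{I}_{N_{\text{RF}}}$, so $\text{tr}(\mathbf{M}\mathbf{\Sigma}_Z)=\sum_i\sigma_i$; therefore $\mathbf{U}_Z\mathbf{V}_Z^H$ is a minimizer, i.e., $\mathcal{P}_{\mathcal{M}_s}(\mathbf{Z})=\mathbf{U}_Z\mathbf{V}_Z^H$.

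The only real subtlety — the point I would flag as the main obstacle — is uniqueness. When $\mathbf{Z}$ has full rank, equality in the displayed chain forces $\mathrm{Re}\,[\mathbf{M}]_{ii}=1$ for all $i$, which together with $|[\mathbf{M}]_{ii}|\le 1$ and unitarity of $\mathbf{M}$ pins down $\mathbf{M}=\mathbf{I}_{N_{\text{RF}}}$, hence $\mathbf{Q}=\mathbf{U}_Z\mathbf{V}_Z^H$ uniquely; if $\mathbf{Z}$ is rank‑deficient the minimizer is unique only up to a unitary rotation acting on the subspace associated with the zero singular values, so $\mathbf{U}_Z\mathbf{V}_Z^H$ should be understood as one valid representative of the projection. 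In the algorithm this caveat is immaterial, since the updated iterate $\mathbf{\bar{V}}^{(n+1)}=\mathbf{\tilde{V}}^{(n)}+\delta_{\mathbf{\tilde{V}}}^{(n)}\boldsymbol{\Delta}_{\mathbf{\tilde{V}}}^{(n)}$ is generically full rank; I would either assume this or state the non‑uniqueness explicitly. Everything else is routine trace‑and‑norm bookkeeping.
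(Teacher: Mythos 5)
Your argument is correct and complete: reducing the Frobenius-norm minimization to maximizing $\mathrm{Re}\,\mathrm{tr}(\mathbf{Q}^H\mathbf{Z})$, passing to $\mathbf{M}=\mathbf{V}_Z^H\mathbf{Q}^H\mathbf{U}_Z$ via the SVD and the cyclic trace property, and bounding the diagonal entries of a unitary matrix by $1$ is the standard orthogonal Procrustes proof, and each step checks out (for square matrices $\mathcal{M}_s$ is indeed the unitary group, so feasibility of $\mathbf{U}_Z\mathbf{V}_Z^H$ is immediate). The paper itself does not prove the proposition; it simply cites Proposition 7 of Manton's work on optimization over unitary constraints, so your contribution is a self-contained elementary derivation of what the paper delegates to a reference. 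Your uniqueness caveat is also well taken and matches the known statement of the result: the minimizer is unique precisely when $\mathbf{Z}$ is nonsingular, and otherwise $\mathbf{U}_Z\mathbf{V}_Z^H$ is one representative among a family differing by a unitary rotation on the null-singular-value subspace --- a point the paper glosses over but which, as you note, is harmless for the algorithm since the iterate being projected is generically full rank.
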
 
\begin{proof}
	Please refer to \cite[Proposition 7]{manton2002opt}.
\end{proof}
Therefore, at the (n)-th iteration on the mainfold ${\mathcal{M}_s}$ is given by
\begin{equation}
     \mathcal{P}_{\mathcal{M}_s}(\mathbf{\bar{V}}^{(n+1)})= \mathcal{P}_{\mathcal{M}_s}(\mathbf{\tilde{V}}^{(n)}+{\delta}^{(n)} \boldsymbol{\Delta}_{\mathbf{\tilde{V}}^{(n)}}).
\end{equation}

According to the above discuss, the main procedures of the Riemannian projected steepest descent algorithm for solving problem \eqref{uncons_problem} over $\mathbf{\tilde{V}}$ are described in Algorithm~\ref{algorithm_remannian_modified}.
Upon termination, the algorithm outputs the obtained solution $(\mathbf{\tilde{V}}^{*}, \mathbf{b}^{*})$. 
It is noted that due to the non-convexity of the  problem \eqref{uncons_problem}, the algorithm converges to a local optimum rather than a global one.
Therefore, the  local optimal solution to problem \eqref{digital_optimization_problem} is 
\begin{equation}
    \mathbf{W}_{\text{BB}}^{*}={{\mathbf{U}}_{\text{B}}}\mathbf{\Sigma }_{\text{B}}^{-\frac{1}{2}}\mathbf{U}_{\text{B}}^{H}\mathbf{\tilde{V}^*}\mathbf{\tilde{\Sigma}^*},
\end{equation}
with
\begin{equation}
        \mathbf{\tilde{\Sigma}^*}=\left[ \begin{matrix}
   \mathrm{diag}\left( \mathbf{b}^* \right)  \\
   {{\mathbf{0}}_{N_{\text{RF}}-N_s,N_s}}  \\
\end{matrix} \right].
    \end{equation}

The computational complexity of Algorithm \ref{algorithm_remannian_modified}, hereafter referred to as Riemannian-Euclidean Joint Gradient Descent (RM-JGD), is dominated by the \ac{svd} decomposition of $\mathbf{B} \triangleq \tilde{\mathbf{U}}^H \mathbf{H}_c^H \mathbf{H}_c \tilde{\mathbf{U}}$, the calculation of Riemannian and Euclidean gradients, and the Riemannian projection in each iteration.
Computing $\mathbf{B}$ has a complexity of $\mathcal{O}(N_{\text{RF}}N_cN)$, while the \ac{svd} of $\mathbf{B}$ requires $\mathcal{O}(N_{\text{RF}}^3)$ operations \cite{golub2013matrix}.
The gradient calculations involve matrix multiplications and inversions with a total complexity of $\mathcal{O}(N_{\text{RF}}^3 + N_{\text{RF}}^2N_s + N_{\text{RF}}{N_s}^2)$.
The Riemannian gradient descent update and projection require $\mathcal{O}(N_{\text{RF}}^2)$ and $\mathcal{O}(N_{\text{RF}}^3)$ operations, respectively \cite{edelman1998geometry}.
Thus, the overall complexity of Algorithm \ref{algorithm_remannian_modified} is $\mathcal{O}(I(N_{\text{RF}}N_cN + N_{\text{RF}}^3 + N_{\text{RF}}^2N_s + N_{\text{RF}}{N_s}^2))$, where $I$ is the number of iterations.

\begin{algorithm}[t]
\caption{RM-JGD Algorithm}
\label{algorithm_remannian_modified}
\begin{algorithmic}[1]
\renewcommand{\algorithmicrequire}{\textbf{Initialize}}
\renewcommand{\algorithmicensure}{\textbf{Output}}
\STATE \textbf{Initialize} $N_s$, $M$, ${\Gamma}_{s}$, $t>0$,  the tolerances $\epsilon_{\tilde{\mathbf{V}}} > 0$ and $\epsilon_\mathbf{b} > 0$, the maximum number of iterations $I_{\max}$, and choose a feasible $\mathbf{\tilde{V}}^{(0)}\in\mathbb{C}^{N_{\text{RF}} \times N_{\text{RF}}}$ such that $(\mathbf{\tilde{V}}^{(0)})^H\mathbf{\tilde{V}}^{(0)}=\mathbf{I}$ and $\mathbf{b}^{(0)} \in \mathbb{R}^r$.
\STATE Set $n:=0$,  compute the Riemannian gradient $\xi_{\mathbf{\tilde{V}}}^{(0)}=\mathbf{\Delta}_{\mathbf{\tilde{V}}}(\mathbf{\tilde{V}}^{(0)}, \mathbf{b}^{(0)})$ and the Euclidean gradient $\xi_{\mathbf{b}}^{(0)} = \nabla_{\mathbf{b}} f(\mathbf{\tilde{V}}^{(0)}, \mathbf{b}^{(0)})$;
\WHILE{$n\le I_{\max}$ and $\|\xi_{\mathbf{\tilde{V}}}^{(n)}\|_F^2 \geq \epsilon_{\tilde{\mathbf{V}}}$ or $\|\xi_{\mathbf{b}}^{(n)}\|^2\geq\epsilon_{\mathbf{b}}$}
\STATE Choose the stepsizes $\delta_{\mathbf{\tilde{V}}}^{(n)}$ and $\delta_{\mathbf{b}}^{(n)}$ using backtracking line search such that:
$f\left(\mathcal{P}_{\mathcal{M}_s}(\mathbf{\tilde{V}}^{(n)}+\delta_{\mathbf{\tilde{V}}}^{(n)} \xi_{\mathbf{\tilde{V}}}^{(n)}), \mathbf{b}^{(n)}+\delta_{\mathbf{b}}^{(n)} \xi_{\mathbf{b}}^{(n)}\right) < f(\mathbf{\tilde{V}}^{(n)}, \mathbf{b}^{(n)})$
\STATE Update $\mathbf{\tilde{V}}^{(n+1)} := \mathcal{P}_{\mathcal{M}_s}(\mathbf{\tilde{V}}^{(n)}+{\delta_{\mathbf{\tilde{V}}}}^{(n)} \xi_{\mathbf{\tilde{V}}}^{(n)})$;
\STATE Update $\mathbf{b}^{(n+1)} := \mathbf{b}^{(n)}+{\delta_{\mathbf{b}}}^{(n)} \xi_{\mathbf{b}}^{(n)}$;
\STATE Update $n:=n+1$;
\STATE Compute the descent direction $\xi_{\mathbf{\tilde{V}}}^{(n)}$ as $\xi_{\mathbf{\tilde{V}}}^{(n)} = \mathbf{\Delta}_{\mathbf{\tilde{V}}}(\mathbf{\tilde{V}}^{(n)}, \mathbf{b}^{(n)})$ according to \eqref{discent_gradient};
\STATE Compute the descent direction $\xi_{\mathbf{b}}^{(n)}$ as $\xi_{\mathbf{b}}^{(n)} = \nabla_{\mathbf{b}} f(\mathbf{\tilde{V}}^{(n)}, \mathbf{b}^{(n)})$ according to \eqref{gradient_b};
\ENDWHILE
\STATE \textbf{Output} $\mathbf{\tilde{V}}^{*}={\mathbf{\tilde{V}}}^{(n)}$ and $\mathbf{b}^{*}={\mathbf{b}}^{(n)}$;
\end{algorithmic}
\end{algorithm}

\subsection{SDR-based Randomization for Near-Optimal Solution}
Although the manifold optimization algorithm introduced in the previous subsection offers an efficient approach to directly optimize the digital beamformer on the rank-constrained space, it may converge to a local optimum that is suboptimal compared to the global solution. This limitation arises from the non-convex nature of the problem, where the algorithm's performance heavily relies on the choice of initialization point. 
In contrast, we propose a two-stage approach in this subsection, which first finds a global optimum of the relaxed problem and then obtains a near-optimal solution that satisfies the rank constraint, aiming to find a high-quality solution to the rank-constrained digital beamformer optimization problem while maintaining low computational complexity. 

To tackle the non-convex optimization problem in \eqref{digital_optimization_problem}, a common approach is to apply the \ac{sdr} technique, which relaxes the non-convex rank constraint by introducing a new variable $\mathbf{R}_{\text{BB}} \triangleq \mathbf{W}_{\text{BB}} \mathbf{W}_{\text{BB}}^H$ and dropping the rank constraint $\mathrm{rank}(\mathbf{R}_{\text{BB}}) = N_s$ 
 Consequently, the original problem \eqref{digital_optimization_problem} is transformed into a \ac{sdp} problem as follows:
\begin{subequations}\label{sdr_optimization_problem}
	\begin{align}
		\label{sdp_objective}
		\max \limits_{{\mathbf{R}}_{\text{BB}}}
		& \ \log \det \left( \mathbf{I}+\frac{1}{\sigma _{c}^{2}}{{\mathbf{H}}_{c}}\mathbf{\tilde{U}}{{\mathbf{R}}_{\text{BB}}}{{{\mathbf{\tilde{U}}}}^{H}}\mathbf{H}_{c}^{H} \right) \\\
		{\rm s.t.}
		\label{sdp_pt_constraint}
		&\ \text{tr}\left( {{\mathbf{R}}_{\text{BB}}} \right)\leq\frac{{{N}_{s}}}{M},  \\\
		\label{sdp_sensing_scnr_constraint}
		& \ \alpha _{1}^{2}\text{tr}\left( {{\mathbf{R}}_{\text{BB}}}{{\mathbf{\Phi }}_{0}} \right)- {{\Gamma }_{s}}\sum\limits_{q=1}^{Q-1}{\alpha _{q}^{2}\text{tr}\left( {{\mathbf{R}}_{\text{BB}}}{{\mathbf{\Phi }}_{q}} \right)} \ge {{\Gamma }_{0}}, \\
		\label{sd_constraint}
		& \ {{\mathbf{R}}_{\text{BB}}}\succeq 0,
	\end{align}
  \end{subequations}
where  ${{\mathbf{R}}_{\text{BB}}}\triangleq {{\mathbf{W}}_{\text{BB}}}\mathbf{W}_{\text{BB}}^{H}$, and  the rank constraint $\mathrm{rank}({\mathbf{R}}_{\text{BB}}) = N_s$ is neglected.

 The relaxed problem (\ref{sdr_optimization_problem}) is convex and can  be efficiently solved using interior-point method.
 However,  the optimal solution $\mathbf{R}_{\text{BB}}^*$ to the relaxed problem may not satisfy the rank constraint $\mathrm{rank}(\mathbf{R}_{\text{BB}}) = N_s$.
To recover a rank-constrained solution to the original problem \eqref{digital_optimization_problem}, we apply the randomization technique, which generates a set of candidate solutions from $\mathbf{R}_{\text{BB}}^*$ and selects the one that maximizes the objective function \eqref{sdp_objective} while satisfying the constraints \eqref{sdp_pt_constraint} and \eqref{sdp_sensing_scnr_constraint}.
The detailed procedure of the proposed algorithm, referred to as SDR-based Randomization for Rank-constrained Solution (SDR-RRS), is summarized in Algorithm \ref{alg:lc_sdr_rrs}. 
\begin{algorithm}[t]
\caption{SDR-RRS Algorithm}
\label{alg:lc_sdr_rrs}
\begin{algorithmic}[1]
\renewcommand{\algorithmicrequire}{\textbf{Input:}}
\renewcommand{\algorithmicensure}{\textbf{Output:}}
\STATE \textbf{Initialize} $\mathbf{H}_c$, $\mathbf{\tilde{U}}$, $\sigma_c^2$, $N_s$, $M$, $\Gamma_s$, $\alpha_q$, $\mathbf{\Phi}_q$, $\sigma_s^2$, $\mathbf{w}$.
\STATE Solve the SDP problem \eqref{sdr_optimization_problem} using interior-point method to obtain $\mathbf{R}_{\text{BB}}^*$;
\STATE Compute the eigenvalue decomposition of $\mathbf{R}_{\text{BB}}^* = \mathbf{U} \mathbf{\Lambda} \mathbf{U}^H$;
\STATE Let $\mathbf{U}_{N_s}$ be the matrix containing the eigenvectors corresponding to the $N_s$ largest eigenvalues;
\STATE Set $\mathbf{V} = \mathbf{U}_{N_s} \mathbf{\Lambda}_{N_s}^{1/2}$, where $\mathbf{\Lambda}_{N_s}$ is the diagonal matrix containing the $N_s$ largest eigenvalues;
\FOR{$i = 1, \dots, 10N_s$}
    \STATE Generate a random matrix $\mathbf{Z}_i \in \mathbb{C}^{N_s \times N_s}$ with i.i.d. entries drawn  from $\mathcal{CN}(0, 1)$;
    \STATE Set $\mathbf{W}_i = \mathbf{V} \mathbf{Z}_i$;
    \STATE Scale $\mathbf{W}_i$ to satisfy the power constraint \eqref{sdp_pt_constraint}:\\
        $\mathbf{W}_i \leftarrow \sqrt{\frac{N_s}{M \cdot \textit{tr}(\mathbf{W}_i \mathbf{W}_i^H)}} \mathbf{W}_i$;
\ENDFOR
\STATE Choose the best solution among $\{\mathbf{W}_i\}_{i=1}^{10N_s}$ that satisfies the SCNR constraint \eqref{sdp_sensing_scnr_constraint} and maximizes the objective \eqref{digital_objective}:
 \begin{align*}
        \mathbf{W}^* \!=\! \arg\max_{\mathbf{W}_i} \left\{ \!\log \det \left( \!\mathbf{I} \!+\! \frac{1}{\sigma_c^2} \mathbf{W}_i^H \mathbf{\tilde{U}}^H \mathbf{H}_c^H \mathbf{H}_c \mathbf{\tilde{U}} \mathbf{W}_i\! \right) \!\right\}
 \end{align*}
subject to \eqref{sdp_sensing_scnr_constraint};
\STATE \textbf{Output} $\mathbf{W}_{\text{BB}} = \mathbf{W}^*$
\end{algorithmic}
\end{algorithm}

The computational complexity of Algorithm \ref{alg:lc_sdr_rrs} is dominated by solving the \ac{sdp} problem in Step 1 and the eigenvalue decomposition in Step 2.
The interior-point method for solving the \ac{sdp} problem has a worst-case complexity of $\mathcal{O}(N^6_{\text{RF}})$ \cite{fujisawa1997}. 
The eigenvalue decomposition of an $N_{\text{RF}} \times N_{\text{RF}}$ matrix has a complexity of $\mathcal{O}(N^3_{\text{RF}})$ \cite{golub2013matrix}. 
The remaining steps involve matrix multiplications and scaling operations, which have a complexity of $\mathcal{O}(N^2_{\text{RF}}N_s)$. 
Therefore, the overall computational complexity of Algorithm \ref{alg:lc_sdr_rrs} is $\mathcal{O}(N^6_{\text{RF}}+N^3_{\text{RF}}+IN^2_{\text{RF}}N_s)$, where $I$ is the number of randomization iterations.

 Compared to the RM-JGD algorithm, the proposed SDR-RRS algorithm has several advantages.
 First, by relaxing the non-convex rank constraint, the SDR-RRS transforms the original problem into a convex \ac{sdp} problem, which can be globally solved in polynomial time.
Second, the randomization technique in SDR-RRS generates multiple candidate solutions based on the global optimum of the relaxed problem, ensuring better solution than RM-JGD, where the obtained local optimum highly depends on random initialization and may be far from the global optimum. 

\section{Simmulation Results}\label{section_simulation}

In this section, we evaluate the performance of the proposed modular \ac{xl-mimo} \ac{isac} system and algorithms through extensive simulations.
Unless otherwise specified, the BS is equipped with a modular XL-array of $K=6$ subarrays, each containing $M=32$ antennas. The operating frequency is 38 GHz with half-wavelength antenna spacing $d=\lambda/2\approx 0.00395$ m.
The user is equipped with $N_c = 16$ antennas and located at a distance of 40 m and an angle of $15^{\circ}$ relative to the \ac{tx} center.
The channel consists of one \ac{los} and three \ac{nlos} paths, with scatterers randomly distributed between 5-30 m and $-60^{\circ}$ to $60^{\circ}$. The path gains follow the 3GPP TR 38.901 specification \cite{3gpp2020study}.
The target is positioned at 30 m and $30^{\circ}$, with two interferences at the same range but at angles of $40^{\circ}$ and $-30^{\circ}$. 
The noise power for communication and sensing are set to -30 dBm and -20 dBm, respectively.
For simplicity, we adopt the omnidirectional transmission with $\mathbf{R}_X=\mathbf{I}$  to calculate the fixed receive beamformer $\mathbf{w}$ according to \eqref{w_calculate} in the simulations.
 Once the optimal transmit covariance matrix $\mathbf{R}_X^*$ is obtained, we substitute it into \eqref{w_calculate} to derive the corresponding optimal receive beamformer $\mathbf{w}^*$.

\begin{figure} [t]
\setlength{\abovecaptionskip}{-0.1cm}
		\centering		\includegraphics[width=0.35\textwidth]{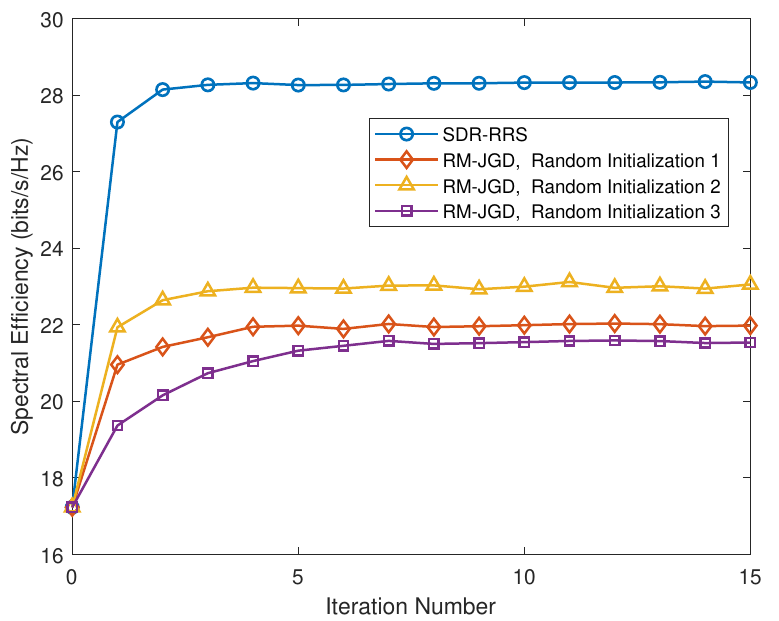}
		\caption{Influence of different barrier parameters and initializations on convergence of Algorithm \ref{algorithm_remannian_modified} .}
\label{overallalg_convergence}\vspace{-3mm}
  \end{figure}

\begin{figure} [t]
\setlength{\abovecaptionskip}{-0.1cm}
		\centering	\includegraphics[width=0.35\textwidth]{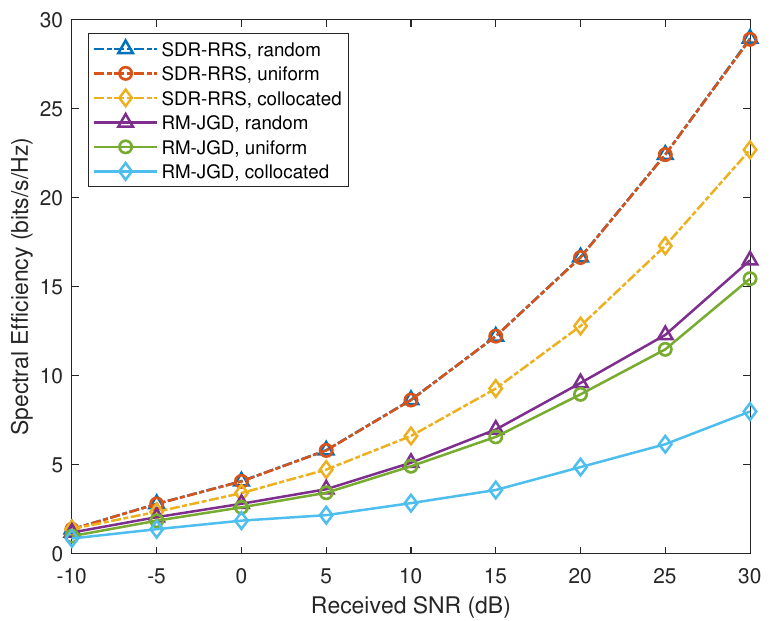}
		\caption{Communication spectral efficiency versus received SNR for different subarray layouts.}
\label{Comrate_vs_SNR_arrayconfig}\vspace{-3mm}
  \end{figure}

We investigate the impact of different subarray distributions on the performance of \ac{isac} systems.
To evaluate the impact of array geometry, we compare three modular array configurations under fixed $K$ and $M$: random layout where subarrays are randomly distributed, uniform layout where subarrays are uniformly distributed, and collocated layout where subarrays are closely spaced with half-wavelength separation.
Fig.~\ref{Comrate_vs_SNR_arrayconfig} shows the communication spectral efficiency versus received SNR for proposed algorithms under different array configurations. 
It is noted that for both algorithms, the modular arrays with random and uniform layouts achieve similar performance and significantly outperform the collocated layout, with the performance gap widening at higher SNRs. 
This superiority of widely-spaced distributions stems from their enhanced near-field characteristics, leading to higher channel rank and improved spatial multiplexing gain.
Additionally, SDR-RRS consistently outperforms RM-JGD across all configurations, demonstrating its better capability in handling the non-convex optimization problem.

\begin{figure} [t]
\setlength{\abovecaptionskip}{-0.1cm}
		\centering	\includegraphics[width=0.35 \textwidth]{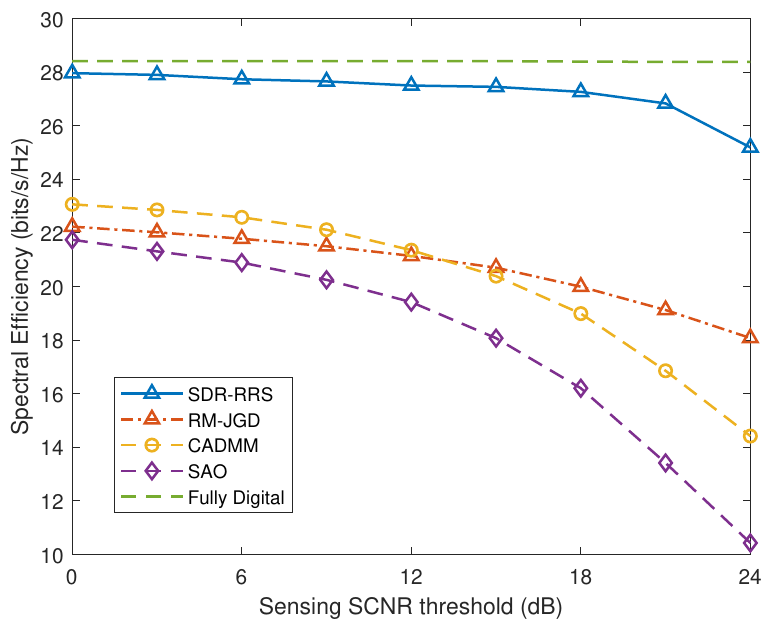}
		\caption{Communication spectral efficiency versus sensing SCNR threshold for different beamforming schemes.}
\label{benchmark}\vspace{-2mm}
  \end{figure}
We compare the proposed SDR-RRS and RM-JGD algorithms with three baselines: fully-digital beamforming (FDB) as the performance upper bound, subarray-based alternating algorithm (SAO) \cite{shen2022alter}, and consensus alternating direction method of multipliers (CADMM) \cite{chen2024mimodfrc}.
Fig.~\ref{benchmark} shows the communication spectral efficiency versus sensing \ac{scnr} threshold for different schemes with 28 \ac{rf} chains.
The results reveal that SDR-RRS consistently achieves the highest spectral efficiency among all hybrid beamforming schemes, closely approaching the FDB upper bound.
Additionally, CADMM outperforms RM-JGD at low SCNR thresholds, while SAO shows the lowest performance.
The SDR-RRS algorithm's superior performance stems from its ability to find a near-optimal solution by leveraging the optimal analog beamformer and approximating the global optimum, while the RM-JGD algorithm may converge to a local optimum far from the global optimum, resulting in slightly inferior performance.
In contrast, the SAO and CADMM algorithms alternately optimize the digital and analog beamformers and employ various approximations to the objective function and rely on various approximations,  which may lead to suboptimal solutions and further performance degradation.

\begin{figure} [t]
\setlength{\abovecaptionskip}{-0.1cm}

		\centering	\includegraphics[width=0.35 \textwidth]{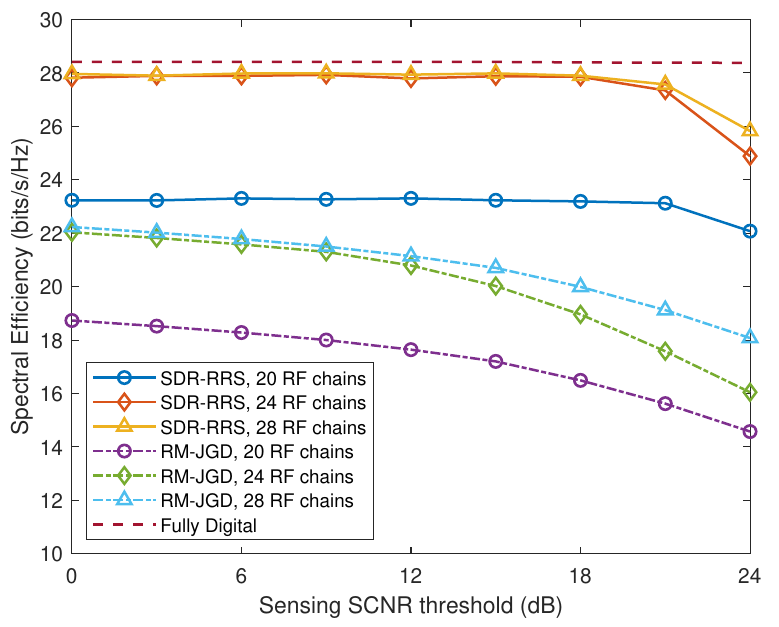}
		\caption{Communication spectral efficiency versus sensing SCNR threshold for different numbers of \ac{rf} chains.}
\label{RFchain_tradeoff}\vspace{-2mm}
  \end{figure}
We then present the communication spectral efficiency with different sensing SCNR thresholds in Fig.~\ref{RFchain_tradeoff} to show the trade-off between sensing and communication performance.
The performance of  fully digital beamforming  serves as an upper bound for comparison.
As the sensing SCNR threshold increases, the communication spectral efficiency decreases for both algorithms, with a more severe performance degradation observed for the case with fewer \ac{rf} chains. 
Specifically, for low sensing SCNR thresholds, the communication spectral efficiency of SDR-RRS with a hybrid beamformer using 24 \ac{rf} chains closely approaches the fully digital performance.
However, the performance gap widens at higher SCNR thresholds. 
Besides,  the SDR-RRS algorithm consistently outperforms the RM-JGD algorithm in terms of communication spectral efficiency across all SCNR thresholds and RF chain configurations, indicating its superior ability to strike a balance between communication and sensing performances.

\begin{figure} [t]
\setlength{\abovecaptionskip}{-0.1cm}
		\centering	\includegraphics[width=0.35\textwidth]{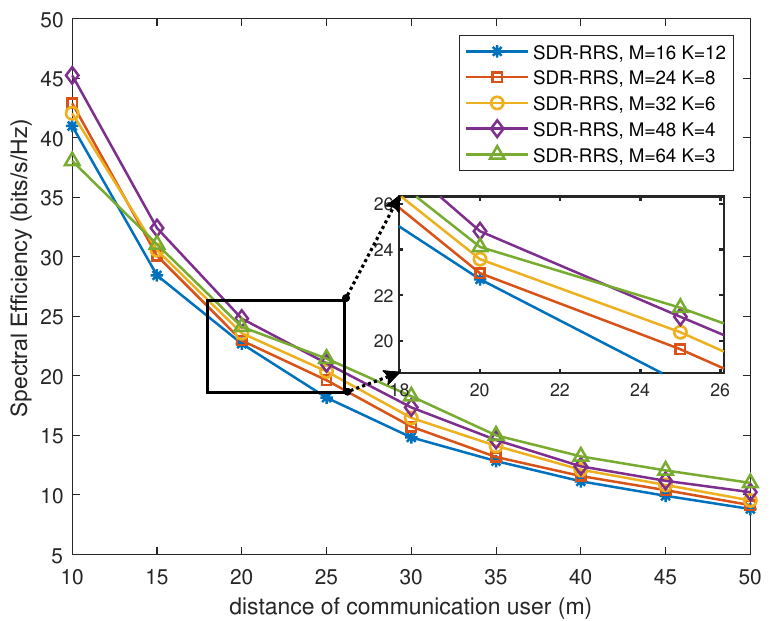}
		\caption{Communication spectral efficiency versus user distance for different subarray scales based on SDR-RRS algorithm.}
\label{subscale_vs_du}\vspace{-2mm}
  \end{figure}
  
  \begin{figure*}[t]
 \centering 
  \subfloat[M=32, K=8] {\label{fig:subfig1}\includegraphics[width=0.27\textwidth]{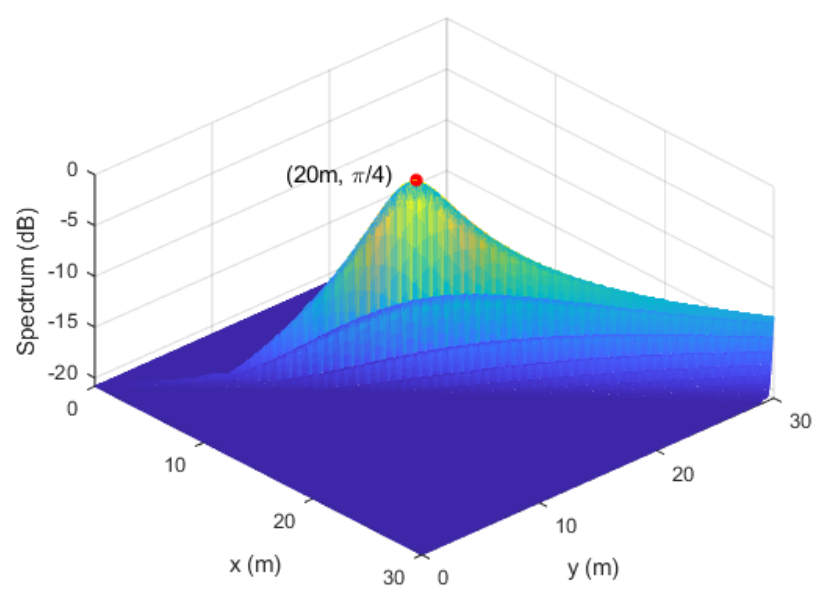}}
\subfloat[M=32, K=12] {\label{fig:subfig2}\includegraphics[width=0.27\textwidth]{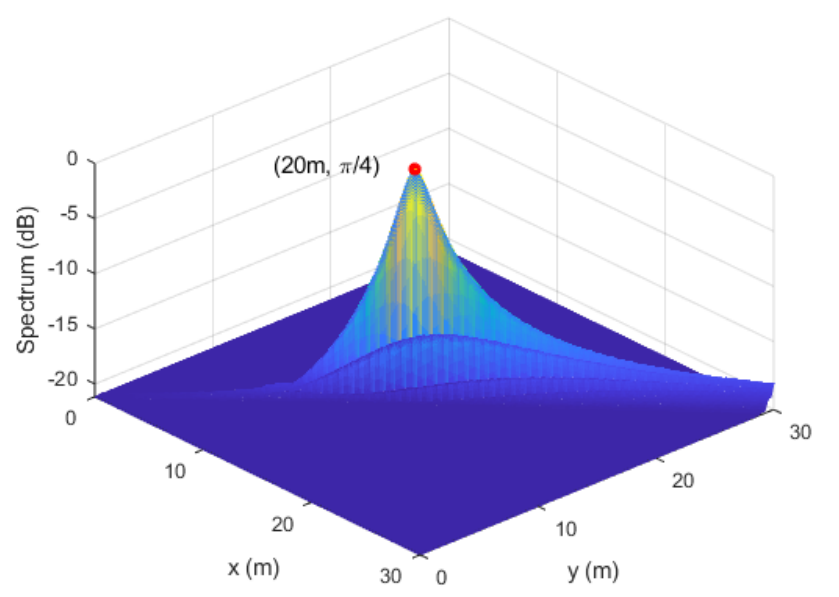}}
\subfloat[M=32, K=16] {\label{fig:subfig3}\includegraphics[width=0.27\textwidth]{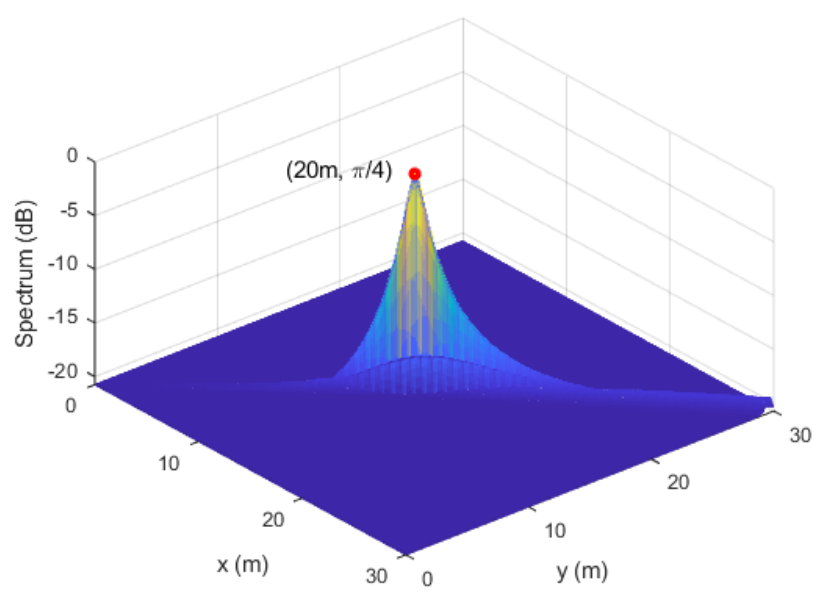}}
  \caption{Normalized \ac{music} spectrum for different numbers of subarrays $K$ with $M=32$ antennas per subarray.} 
   \label{MUSIC_specturm}\vspace{-2mm}
\end{figure*}  
Fig.~\ref{subscale_vs_du} investigates the impact of subarray scale on the communication performance of SDR-RRS algorithm.
 Under fixed total array aperture $S$ and antenna number $N_t = 192$, we vary the subarray antenna number $M$ as 16, 24, 32, 48, and 64, with corresponding subarray number $K$ being 12, 8, 6, 4, and 3.
It can be observed that the optimal subarray scale varies with the user distance. 
For users within 25 m, the subarray scale of $M = 48$ outperforms the others, while beyond 25 m, the subarray scale of $M = 64$ achieves the highest communication spectral efficiency.
A larger number of subarrays $K$ provides higher spatial multiplexing gains, but a smaller $M$ reduces the beamforming gain. %
Thus, the optimal subarray scale at different distances represents a trade-off between spatial multiplexing and beamforming gains for maximizing the communication performance. 
Moreover, the performance gap between subarray scales  decreases with distance, due to the transition from near-field spherical wavefronts to far-field planar wavefronts, which gradually diminishes the spatial multiplexing advantages of having more subarrays.

In Fig.~\ref{MUSIC_specturm}, the normalized \ac{music} spectrum obtained by the proposed SDR-RRS algorithm is compared for $M = 32$ with varying numbers of subarrays $K$, over a grid spanning $x \in [0:0.06:30]$ m and $y \in [0:0.06:30]$ m. 
The peak of the spectrum is consistently observed at the actual target location $(20 \mathrm{m}, \pi/4)$ for all values of $K$.  
Notably, the main lobe width containing the peak value narrows as $K$ increases, indicating enhanced range resolution at the same angular direction. 
This behavior can be attributed to the spherical wavefront characteristic across subarrays, which becomes more pronounced with increasing $K$, leading to improved range estimation accuracy. 
The results demonstrate that augmenting the number of subarrays can significantly enhance the range resolution capability of the \ac{music} algorithm, thereby enabling more precise target localization.

\section{Conclusion}\label{section_conclusion}
In this paper, we have developed the low-complexity hybrid beamforming algorithms for  modular \ac{xl-mimo} \ac{isac} systems. 
By exploiting the structural similarity between the communication and sensing channels based on the  the piecewise-far-field channel model, we have derived a closed-form solution for the optimal analog beamformer and transformed the joint analog-digital design problem into a lower-dimensional digital beamformer optimization.
Two approaches have been developed for the rank-constrained digital beamformer optimization: a manifold-based method and an \ac{sdr}-based method for obtaining near-optimal solutions.
Extensive simulations have validated the effectiveness of the proposed  algorithms.


	\appendices
    \section{Proof of Lemma \ref{Hc_rank}}\label{channel_rank_proof}
    First, since $\mathbf{H}_c \in \mathbb{C}^{N_c \times N}$ and $N_c \ll N$, the rank of $\mathbf{H}_c$ is upper bounded by $N_c$. 
  For a given propagation path $p$, the subarray response vectors $\mathbf{a}_{tp}^{1}\left( \theta _{tp}^{1} \right),\ldots,\mathbf{a}_{tp}^{K}\left( \theta _{tp}^{K} \right)$ are linearly independent.
		Similarly, due to the distinguishability of $N_p$ propagation paths, the angles of different paths are different.
  Therefore, for a fixed transmit subarray $k$, the response vectors $\mathbf{a}_{t1}^{k}\left( \theta _{t1}^{k} \right),\ldots,\mathbf{a}_{tN_p}^{k}( \theta _{tN_p}^{k})$ for different propagation paths $p=1,\ldots,N_p$ are linearly independent.
		Therefore, we can obtain the $n$-th row of the communication channel $\mathbf{H}_c$ in  (\ref{hspm_com}) $\mathbf{H}_c(n,:)$ is given by
  \begin{equation}
      \begin{aligned}
  & {{\mathbf{H}}_{c}}\left( n,: \right)=\\
  &\left[ \sum\limits_{p=1}^{{{N}_{p}}}{\mu _{p}^{1}} \mathbf{a}_{cp}^{1}\left( n \right){{\left( \mathbf{a}_{tp}^{1} \right)}^{H}} \right.\left. \cdots\  \sum\limits_{p=1}^{{{N}_{p}}}{\mu _{p}^{K}} \mathbf{a}_{cp}^{K}\left( n \right){{\left( \mathbf{a}_{tp}^{K} \right)}^{H}} \right],   
\end{aligned}
  \end{equation}
where $\mathbf{a}_{tp}^{k}$ is a shorthand notation for $\mathbf{a}_{tp}^{k}\left( \theta _{tp}^{k} \right)$, and $\mathbf{a}_{cp}^{k}\left( n \right)$ denotes the $n$-th element of $\mathbf{a}_{cp}^{k}\left( \theta_{cp}^{k} \right)$.
%
Therefore, each row vector $\mathbf{H}_c(n,:)$ is a linear combination of the following $KN_p$ linearly independent row vectors:
\begin{equation} \label{independent_vector}
    \mathbf{\bar{a}}_{tp}^{k} \triangleq [\  \underbrace{\mathbf{0}, \ldots , \mathbf{0}}_{k-1},(\mathbf{a}_{tp}^{k})^{H}, \underbrace{\mathbf{0}, \ldots , \mathbf{0}}_{K-k} \ ] \in \mathbb{C}^{1\times KM}, 
\end{equation}
where $k=1, 2, \cdots, K$, $p=1, 2, \cdots, N_p$, and $\mathbf{0}$ is an all-zero vector of dimension $1\times M$.

However, the arrival angles corresponding to different propagation paths from different transmit subarrays may not be spatially distinguishable at the receive array, resulting in potential linear dependency among the vectors in \eqref{independent_vector}. 
This dependency reduces the rank of the channel matrix $\mathbf{H}_c$, thereby establishing its upper bound as $KN_p$, which validates the right-hand inequality.
The left-hand inequality holds in the extreme case where all transmit subarrays share the same arrival angle to the receive array for a given propagation path $p$, i.e., $\theta_{cp}^1= \theta_{cp}^2=\ldots = \theta_{cp}^K$\cite{chen2023hybrid}.
	\section{Proof of Theorem \ref{optimal_structure_RX}} \label{optimal_structure_RX_proof}
	By letting $\mathbf{R}_{X}^{*}=\boldsymbol{\Delta} \boldsymbol{\Delta}^{\mathrm{H}}$, we can decompose $\boldsymbol{\Delta}$ as
	\begin{equation}
        \setlength{\abovedisplayskip}{3pt}
	\setlength{\belowdisplayskip}{3pt}
\boldsymbol{\Delta}=P_{\tilde{U}}\boldsymbol{\Delta}+\mathbf{P}_{\tilde{U}}^{\perp} \boldsymbol{\Delta},
	\end{equation}
    where $\mathbf{P}_{\tilde{U}}=\mathbf{{\tilde{U}}}(\mathbf{{\tilde{U}}}^H \mathbf{{\tilde{U}}})^{-1} \mathbf{{\tilde{U}}}^H$ denotes the orthogonal projection onto the subspace spanned by the columns of $\mathbf{{\tilde{U}}}$ in (\ref{tilde_U_k}), and $\mathbf{P}_{\tilde{U}}^{\perp}=\mathbf{I}-\mathbf{P}_{\tilde{U}}$.
    Furthermore, we can decompose $\boldsymbol{\Delta}$ additively as
    \begin{equation} \label{RX_decomposion}
        \setlength{\abovedisplayskip}{3pt}
	\setlength{\belowdisplayskip}{1pt}
\mathbf{R}_{X}^*=\mathbf{P}_{\tilde{U}} \boldsymbol{\Delta}\boldsymbol{\Delta}^H \mathbf{P}_{\tilde{U}}+\tilde{\mathbf{R}}_{X},
    \end{equation}
    with
    \begin{equation}\label{tilde_RX}
        \setlength{\abovedisplayskip}{3pt}
	\setlength{\belowdisplayskip}{3pt}
{{\widetilde{\mathbf{R}}}_{X}}=\mathbf{P}_{{\tilde{U}}}^{\bot }\mathbf{\Delta }{{\mathbf{\Delta }}^{H}}\mathbf{P}_{{\tilde{U}}}^{\bot }+{{\mathbf{P}}_{{\tilde{U}}}}\mathbf{\Delta }{{\mathbf{\Delta }}^{H}}\mathbf{P}_{{\tilde{U}}}^{\bot }+\mathbf{P}_{{\tilde{U}}}^{\bot }\mathbf{\Delta }{{\mathbf{\Delta }}^{H}}{{\mathbf{P}}_{{\tilde{U}}}}.
    \end{equation}
    By utilizing the property of orthogonal projection matrices that ${{\mathbf{P}}_{{\tilde{U}}}}\mathbf{\tilde{U}}=\mathbf{\tilde{U}}$ and ${{\mathbf{\tilde{U}}}^{H}}{{\mathbf{P}}_{{\tilde{U}}}}={{\mathbf{\tilde{U}}}^{H}}$, we can obtain that
    \begin{equation}
        \setlength{\abovedisplayskip}{3pt}
	\setlength{\belowdisplayskip}{3pt}
    	{{\mathbf{\tilde{U}}}^{H}}{{\widetilde{\mathbf{R}}}_{X}}\mathbf{\tilde{U}}=\boldsymbol{0}.
    \end{equation}

    According to (\ref{tilde_U}), utilizing the property of permutation matrix that $\mathbf{P}^{-1}=\mathbf{P}^T$, we have
    \begin{equation}\label{U_tilde_U}
        \setlength{\abovedisplayskip}{3pt}
	\setlength{\belowdisplayskip}{3pt}
    		\mathbf{U} =\mathbf{\tilde{U}}\mathbf{P}^T.
    \end{equation}
    Therefore, by substituting (\ref{U_tilde_U}) into (\ref{gt_linearcom}) and (\ref{Vc_linearcom}), $\mathbf{V}_c$ and ${{\mathbf{g}}_{tq}}, \forall q$ can be expressed as linear transformations of $\mathbf{\tilde{U}}$, namely
    \begin{equation} 
        \setlength{\abovedisplayskip}{3pt}
	\setlength{\belowdisplayskip}{3pt}\label{gt_linearcom_tilde_U} 	{{\mathbf{g}}_{tq}}=\mathbf{\tilde{U}}\mathbf{P}^T\mathbf{\tilde{ {\boldsymbol{\nu }}}}_{tq} , \  \forall q,
    \end{equation}
    \begin{equation} 
        \setlength{\abovedisplayskip}{3pt}
	\setlength{\belowdisplayskip}{3pt}\label{Vc_linearcom_tilde_U}
	\setlength{\belowdisplayskip}{1pt}
    	{{\mathbf{V}}_{c}} =\mathbf{\tilde{U}}\mathbf{P}^T\mathbf{\tilde{T}}.
    \end{equation}
    Thus, it can be readily verified that
    \begin{subequations}\label{gVc_RX_gVc}
    	\begin{align}
    		\label{gtRgt}
    			{{\mathbf{g}}_{tq}}^{H}{{\mathbf{\tilde{R}}}_{X}}{{\mathbf{g}}_{tq}})&=\widetilde{\boldsymbol{\nu }}_{tq}^{H}\mathbf{P}{{\mathbf{\tilde{U}}}^{H}}{{\mathbf{\tilde{R}}}_{X}}\mathbf{\tilde{U}}{{\mathbf{P}}^{T}}{{\widetilde{\boldsymbol{\nu }}}_{tq}}=0, \forall q,\\
    	    \label{VcRVc}
    	    \mathbf{V}_{c}^{H}{{\mathbf{\tilde{R}}}_{X}}{{\mathbf{V}}_{c}}&={{\mathbf{\tilde{T}}}^{H}}\mathbf{P}{{\mathbf{\tilde{U}}}^{H}}{{\mathbf{\tilde{R}}}_{X}}\mathbf{\tilde{U}}{{\mathbf{P}}^{T}}\mathbf{\tilde{T}}={{\mathbf{0}}}.
    	\end{align}   
    \end{subequations}
	Based on the \ac{svd} of ${{\mathbf{H}}_{c}}$, the achievable communication spectral efficiency in (\ref{comm_rate}) can be rewritten accordingly as
	\begin{equation} \label{svd_Comm_rate}
        \setlength{\abovedisplayskip}{3pt}
	\setlength{\belowdisplayskip}{3pt}
		C=\log \det \left( \mathbf{I}+\frac{1}{\sigma     _{c}^{2}}{\mathbf{\Sigma }_{c}}\mathbf{V}_{c}^{H}{{\mathbf{R}}_{X}}{{\mathbf{V}}_{c}} \right).
	\end{equation}
   
   By substituting (\ref{RX_decomposion}) into (\ref{sensing_scnr}) and (\ref{svd_Comm_rate}) and observing (\ref{gVc_RX_gVc}), we can conclude that the sensing \ac{scnr} and the achievable communication spectral efficiency are both independent of ${{\mathbf{\tilde{R}}}_{X}}$. 
   Besides, we note that
   \begin{equation}\label{power_RX}
       \setlength{\abovedisplayskip}{3pt}
	\setlength{\belowdisplayskip}{3pt}
   	\text{tr}\left( {{{\mathbf{\tilde{R}}}}_{X}} \right)=\text{tr}\left( \mathbf{P}_{{\tilde{U}}}^{\bot }\mathbf{\Delta }{{\mathbf{\Delta }}^{H}}\mathbf{P}_{{\tilde{U}}}^{\bot } \right)=\left\| \mathbf{\Delta }^H\mathbf{P}_{{\tilde{U}}}^{\bot } \right\|_{F}^{2}\ge 0,
   \end{equation}
    which means that the ${{\mathbf{\tilde{R}}}_{X}}$ component does not contribute to either the communication or sensing performance, but only consumes the transmit power\cite{li2008range}.
    Thus, we must have $\text{tr}( {{{\mathbf{\tilde{R}}}}_{X}})=0$ to satisfy the transmit power constraint.
    The equality in (\ref{power_RX}) holds if  and only if $\mathbf{\Delta }^H\mathbf{P}_{{\tilde{U}}}^{\bot }=\mathbf{0}$,  and it can be observed from (\ref{tilde_RX}) that this implies ${{\mathbf{\tilde{R}}}_{X}}=\mathbf{0}$.
    
    Therefore, the optimal covariance matrix $\mathbf{R}_X^{*}$ can be written as
    \begin{equation}
      \setlength{\abovedisplayskip}{0.5pt}
	\setlength{\belowdisplayskip}{0.5pt}
    	\mathbf{R}_{X}^{*}={{\mathbf{P}}_{{\tilde{U}}}}\mathbf{\Delta }{{\mathbf{\Delta }}^{H}}{{\mathbf{P}}_{{\tilde{U}}}}\triangleq \mathbf{\tilde{U}\Lambda }{{\mathbf{\tilde{U}}}^{H}},
    \end{equation}
   where $	\mathbf{\Lambda }$ is a positive semi-definite matrix, and it can be given by
   \begin{equation}
     \setlength{\abovedisplayskip}{0.5pt}
	\setlength{\belowdisplayskip}{0.5pt}
   	\mathbf{\Lambda }={{({{\mathbf{\tilde{U}}}^{H}}\mathbf{\tilde{U}})}^{-1}}{{\mathbf{\tilde{U}}}^{H}}\mathbf{\Delta }{{\mathbf{\Delta }}^{H}}\mathbf{\tilde{U}}{{({{\mathbf{\tilde{U}}}^{H}}\mathbf{\tilde{U}})}^{-1}},
   \end{equation}
   which completes the proof.

	\section{Proof of lemma \ref{structure_beamformer}} \label{optimal_structure_WBB_proof}
First, we denote  $\mathbf{A} \triangleq \mathbf{I}+\frac{1}{\sigma _{c}^{2}}\mathbf{W}_{\text{BB}}^H \mathbf{B}\mathbf{W}_{\text{BB}}$.
 To complete the main proof, we introduce Lemma~\ref{lemma_proof} \cite[Theorem 9.B.1]{marshall1979inequalities}.
 \begin{lemma} \label{lemma_proof} 
	\emph{ Let $\mathbf{R}$ be an $n\times n$ Hermitian matrix with diagonal elements denoted by the vector $\mathbf{d}$ and eigenvalues denoted by the vector $\boldsymbol{\lambda}$. Then, 
 \begin{equation}
     \mathbf{d} \prec \boldsymbol{\lambda},
 \end{equation}
 which means $\mathbf{d}$ is majorized by $\boldsymbol{\lambda}$\cite[Definition 1.A.1]{marshall1979inequalities}.
    }	
\end{lemma}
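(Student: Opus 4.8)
The plan is to prove this classical Schur majorization result by exhibiting the diagonal vector $\mathbf{d}$ as the image of the eigenvalue vector $\boldsymbol{\lambda}$ under a doubly stochastic map, and then invoking the characterization of majorization in terms of doubly stochastic matrices. First I would write the spectral decomposition $\mathbf{R}=\mathbf{U}\boldsymbol{\Lambda}\mathbf{U}^H$, where $\mathbf{U}$ is unitary and $\boldsymbol{\Lambda}=\mathrm{diag}(\lambda_1,\ldots,\lambda_n)$ collects the eigenvalues. Reading off the $i$-th diagonal entry of $\mathbf{R}$ gives
\begin{equation}
d_i=\sum_{j=1}^{n}|U_{ij}|^2\lambda_j,
\end{equation}
so that $\mathbf{d}=\mathbf{S}\boldsymbol{\lambda}$ with $\mathbf{S}\triangleq(|U_{ij}|^2)_{i,j}$.

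The next step is to verify that $\mathbf{S}$ is doubly stochastic. Its entries are nonnegative by construction, and because $\mathbf{U}$ is unitary each row and each column has unit Euclidean norm, i.e. $\sum_j |U_{ij}|^2 = 1$ and $\sum_i |U_{ij}|^2 = 1$; hence every row sum and column sum of $\mathbf{S}$ equals one. With $\mathbf{S}$ doubly stochastic established, I would invoke the Hardy--Littlewood--P\'olya theorem (equivalently, Birkhoff's theorem on the extreme points of the doubly stochastic polytope together with the convexity of ordered partial sums): a vector of the form $\mathbf{S}\boldsymbol{\lambda}$ with $\mathbf{S}$ doubly stochastic is always majorized by $\boldsymbol{\lambda}$. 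This yields $\mathbf{d}\prec\boldsymbol{\lambda}$ directly and completes the argument.

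Alternatively, I could give a fully self-contained argument that avoids citing Hardy--Littlewood--P\'olya, using the Ky Fan maximum principle. Total-sum equality is immediate from the trace identity $\sum_i d_i = \mathrm{tr}(\mathbf{R}) = \sum_i \lambda_i$. For the partial-sum inequalities, recall that for any index set $I$ with $|I|=k$ the standard basis vectors $\{\mathbf{e}_i\}_{i\in I}$ are orthonormal, so Ky Fan's principle gives $\sum_{i\in I}\mathbf{e}_i^H\mathbf{R}\mathbf{e}_i \le \sum_{j=1}^k \lambda_{[j]}$, the sum of the $k$ largest eigenvalues. Choosing $I$ to index the $k$ largest diagonal entries then produces $\sum_{j=1}^k d_{[j]} \le \sum_{j=1}^k \lambda_{[j]}$ for every $k$, which together with the trace identity is precisely the definition of $\mathbf{d}\prec\boldsymbol{\lambda}$.

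The main obstacle is the step that converts the doubly stochastic representation into the ordered partial-sum inequalities of majorization: this is the nontrivial content (Hardy--Littlewood--P\'olya / Birkhoff), whereas the spectral decomposition and the double stochasticity of $\mathbf{S}$ are routine. If I elect the self-contained route instead, the analogous burden shifts onto the Ky Fan maximum principle, whose justification rests on the Courant--Fischer variational characterization of eigenvalues. In either case the crux is relating a sum of quadratic forms over an arbitrary $k$-dimensional coordinate subspace to the top $k$ eigenvalues, which is exactly where the real work lies.
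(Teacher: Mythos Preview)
Your argument is correct: this is precisely Schur's classical theorem, and both routes you sketch---the spectral decomposition yielding $\mathbf{d}=\mathbf{S}\boldsymbol{\lambda}$ with $\mathbf{S}$ doubly stochastic, followed by Hardy--Littlewood--P\'olya, or alternatively the direct partial-sum bound via Ky Fan---are standard and sound.

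The paper, however, does not actually prove this lemma. It is stated inside Appendix~C as an auxiliary tool and is simply attributed to \cite[Theorem 9.B.1]{marshall1979inequalities} without further argument. So your proposal is not taking a different route from the paper's proof; rather, you are supplying a proof where the paper only gives a citation. What your write-up buys is self-containment (a reader need not consult Marshall--Olkin), at the cost of a few extra lines; what the paper's approach buys is brevity, since the result is textbook material and not the paper's contribution.
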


Since maximization of mutual information is Schur-concave, according to Lemma~\ref{lemma_proof}, we have 
 \begin{equation}
     \log \det (\boldsymbol{\lambda}(\mathbf{A}))\leq\log \det (\text{Diag}(\mathbf{A})),
 \end{equation}where $\boldsymbol{\lambda}(\mathbf{A})$ is the vector of eigenvalues of $\mathbf{A}$ in decreasing order.
Therefore, if the optimal $\mathbf{W}_{\text{BB}}^*$ does not diagonalize $\mathbf{B}$,  we can always find a unitary matrix $\mathbf{M}$ satisfying $\mathbf{M}\mathbf{M}^H = \mathbf{I}$,  which diagonalizes $\mathbf{B}$.
By defining $\tilde{\mathbf{W}}_{\text{BB}}^*=\mathbf{W}_{\text{BB}}^*\mathbf{M}$, the objective function \eqref{digital_objective} is diminished.
	
It is also noted that 
\begin{subequations}
   \begin{align}      &\operatorname{tr}\left({\tilde{\mathbf{W}}_{\text{BB}}^*} (\tilde{\mathbf{W}}_{\text{BB}}^*)^H\right) = \operatorname{tr}\left({\mathbf{W}_{\text{BB}}^*}  ({\mathbf{W}}_{\text{BB}}^*)^H\right) \leq \frac{N_s}{M},\\
    &\operatorname{tr}\left( (\tilde{\mathbf{W}}_{\text{BB}}^*)^H\boldsymbol{\Phi}_q\tilde{\mathbf{W}}_{\text{BB}}^*\right) =   \operatorname{tr}\left( ({\mathbf{W}}_{\text{BB}}^*)^H\boldsymbol{\Phi}_q{\mathbf{W}}_{\text{BB}}^*\right), \forall q,
   \end{align}
\end{subequations}
which implies that  constraints \eqref{cons_power_WBB} and \eqref{change_scnr_cons} still hold with $\tilde{\mathbf{W}}_{\text{BB}}^*$.
This means $\tilde{\mathbf{W}}_{\text{BB}}^*$ is a better solution than ${\mathbf{W}}_{\text{BB}}^*$, leading to a contradiction.  
Hence, there exists an optimal solution  ${\mathbf{W}}_{\text{BB}}^*$ which can diagonalize ${\mathbf{B}}$.

As a result,  $\mathbf{A}$ can also be diagonalized by ${\mathbf{W}}_{\text{BB}}^*$, yielding ${\mathbf{W}_{\text{BB}}^*}^H \mathbf{B}\mathbf{W}_{\text{BB}}^*=\mathbf{D}$, where $\mathbf{D}$ is a square diagonal matrix.
Consequently, we obtain
\begin{equation}\label{wbb_inter}
    \mathbf{B}^{\frac{1}{2}}\mathbf{W}_{\text{BB}}^* = 
 \mathbf{\tilde{V}}\tilde{\boldsymbol{\Sigma}},
\end{equation}
where $\mathbf{\tilde{V}}$ is an arbitrary unitary matrix, $\boldsymbol{\Sigma}=\left[\begin{array}{c}
\mathbf{D}^{1 / 2} \\
\mathbf{0}
\end{array}\right]$ and $ \boldsymbol{\Sigma}^H \boldsymbol{\Sigma}=\mathbf{D}$
Taking  \ac{svd} of $\mathbf{B}$, we have $\mathbf{B} = {{\mathbf{U}}_{\text{B}}}\mathbf{\Sigma }_{\text{B}}\mathbf{U}_{\text{B}}^{H}$.
Based on \eqref{wbb_inter}, it follows that
\begin{equation}
\begin{aligned}
    \mathbf{W}_{\text{BB}}^* = \mathbf{B}^{-\frac{1}{2}}\mathbf{\tilde{V}}\tilde{\boldsymbol{\Sigma}} ={{\mathbf{U}}_{\text{B}}}\mathbf{\Sigma }_{\text{B}}^{-\frac{1}{2}}\mathbf{U}_{\text{B}}^{H}\mathbf{\tilde{V}}\tilde{\boldsymbol{\Sigma}},
\end{aligned}
\end{equation}
which completes the proof.

	\bibliographystyle{IEEEtran}
	\bibliography{mybib}
\end{document}